\newcommand*{\mathcolor}{}
\def\mathcolor#1#{\mathcoloraux{#1}}
\newcommand*{\mathcoloraux}[3]{%
  \protect\leavevmode
  \begingroup
    \color#1{#2}#3%
  \endgroup
}
\newcommand{\bx}{{\boldsymbol{x}}}
\newcommand{\bu}{{\boldsymbol{u}}}
\newcommand{\bv}{{\boldsymbol{v}}}
\newcommand{\bz}{{\boldsymbol{z}}}
\newcommand{\by}{\boldsymbol{y}}
\newcommand{\bhx}{{\boldsymbol{\widehat{x}}}}
\newcommand{\g}{\textsl{g}}
\newcommand{\bg}{\textbf{\textsl{g}}}
\renewcommand{\@algocf@capt@plain}{above}% formerly {bottom}
\DeclareMathOperator*{\argmin}{arg\,min}
\theoremstyle{definition}
\newtheorem{theorem}{Theorem}[section]
\newtheorem{corollary}{Corollary}[theorem]
\newtheorem{lemma}[theorem]{Lemma}
\newtheorem{proposition}[theorem]{Proposition}
\newtheorem{assumption}[theorem]{Assumption}
\begin{document}

\title{\textcolor{black}{Measurement} Bounds for Sparse Signal \textcolor{black}{Recovery} With Multiple Side Information}

\author{Huynh Van Luong,~J\"{u}rgen Seiler,~Andr\'{e} Kaup,~S{\o}ren Forchhammer,~and~Nikos Deligiannis%, \emph{Member, IEEE}

%\thanks{Copyright \copyright~2014 IEEE. Personal use of this material is permitted. However, permission to use this material for any other purposes must be obtained from the IEEE by sending a request to pubs-permissions@ieee.org.}
%\thanks{A part of this work, the RAMSI algorithm, has been presented at the Data Compression Conference 2016 [39].}%.}
\thanks{H. V. Luong, J. Seiler, and A. Kaup are with the Chair of Multimedia Communications and Signal Processing, Friedrich-Alexander-Universit\"{a}t Erlangen-N\"{u}rnberg, 91058 Erlangen, Germany (e-mail: huynh.luong@fau.de, juergen.seiler@fau.de, and andre.kaup@fau.de).}
\thanks{S. Forchhammer is with the Department of Photonics Engineering, Technical University of Denmark, 2800 Lyngby, Denmark (e-mail: sofo@fotonik.dtu.dk).}
\thanks{N. Deligiannis is with the Department of Electronics and Informatics, Vrije Universiteit Brussel, 1050 Brussels, and also with iMinds, 9050 Ghent, Belgium (e-mail: ndeligia@etro.vub.ac.be).}
%}
%\thanks{ *Corresponding author. Tel: +49 9131 85 27664. Fax: +49 9131 85 28849. E-mail: huynh.luong@fau.de (H. V. Luong).}
}

%\markboth{IEEE Transactions on Image Processing}%
%{Luong~\MakeLowercase{\textit{et al.}}: Reestimation of Motion and Reconstruction for Distributed Video Coding}
%
\maketitle

\begin{abstract}
  In the context of compressed sensing (CS), this paper considers the problem of reconstructing sparse signals with the aid of other given correlated sources as multiple side information. To address this problem, we theoretically study a generic \textcolor{black}{weighted $n$-$\ell_{1}$ minimization} framework and propose a reconstruction algorithm that leverages multiple side information signals (RAMSI). The proposed RAMSI algorithm computes adaptively optimal weights among the side information signals at every reconstruction iteration. In addition, we establish theoretical bounds on the number of measurements that are required to successfully reconstruct the sparse source by using \textcolor{black}{weighted $n$-$\ell_{1}$ minimization}. The analysis of the established bounds reveal that \textcolor{black}{weighted $n$-$\ell_{1}$ minimization} can achieve sharper bounds and significant performance improvements compared to classical CS. We evaluate experimentally the proposed RAMSI algorithm and the established bounds using synthetic sparse signals as well as correlated feature histograms, extracted from a multiview image database for object recognition. The obtained results show clearly that the proposed algorithm outperforms state-of-the-art algorithms---\textcolor{black}{including classical CS, $\ell_1\text{-}\ell_1$ minimization, Modified-CS, regularized Modified-CS, and weighted $\ell_1$ minimization}---in terms of both the theoretical bounds and the practical performance.
\end{abstract}

\vspace{5pt}
\begin{IEEEkeywords}
 Sparse signal recovery, compressed sensing, multiple side information, weighted $n$-$\ell_{1}$ minimization, measurement bound.
\end{IEEEkeywords}

\ifCLASSOPTIONpeerreview
\begin{center} \bfseries EDICS Category: SAM-CSSM \end{center}
\fi

\IEEEpeerreviewmaketitle

\section{Introduction}\label{sec:intro}
\IEEEPARstart{C}{ompressed} %\vspace{-15pt}
sensing (CS) is a theory to perform sparse signal reconstruction that has attracted significant attention \cite{DonohoTIT06,candes2006robust,DonohoCOM06,CandesTIT06,Beck09,Candes08,Asif13,NVaswani09,NVaswani10,NVaswaniTSP10,VaswaniZ16,MPFriedlander12,JZhan14,JZhan15,AminTIT15,Weizman15,Venkat12,Tropp14,BaronARXIV09,DuarteTIT13} in the past decade. CS enables sparse signals to be recovered in a computationally tractable manner from a relatively limited number of random measurements. This can be done by solving a Basis Pursuit problem, which involves the $\ell_{1}$-norm minimization of the sparse signal subject to the measurements. It has been shown that sparse signal reconstruction can be further improved by replacing the $\ell_{1}$-norm with a weighted $\ell_{1}$-norm \cite{Candes08,Asif13,AminTIT15,Weizman15}. The works in \cite{Venkat12,Tropp14} give general conditions for exact and robust recovery \textcolor{black}{with the purpose to provide accurate CS bounds on the number of measurements} required for successful reconstruction based on convex optimization problems. Furthermore, distributed compressive sensing \cite{BaronARXIV09,DuarteTIT13} allows a correlated ensemble of sparse signals to be jointly recovered by exploiting intra- and inter-signal correlations.

\textcolor{black}{
We focus on the problem of efficiently reconstructing a sparse source from low-dimensional random measurements given side (alias, prior) information. In a practical setting, when reconstructing a signal, we may have access to side or prior information, gleaned from knowledge on the structure of the signal or from a known correlated signal that has spatial or temporal similarities with the target signal. The problem of signal recovery with side (or prior) information was initially studied in~\cite{NVaswani09,NVaswaniTSP10,NVaswaniTSP12,ZhangR11,KhajehnejadXAH09,KhajehnejadXAH11}. Specifically, the modified-CS method in~\cite{NVaswani09,NVaswaniTSP10} considered that a part of the support is available from prior knowledge and tried to find the signal that satisfies the measurement constraint and is sparsest outside the known support. Prior information
on the sparsity pattern of the data was also considered in~\cite{ZhangR11} and information-theoretic guarantees were presented. The studies in \cite{KhajehnejadXAH09,KhajehnejadXAH11} introduced weights into the $\ell_{1}$ minimization framework that depend on the partitioning of the source signal into two sets, with the entries of each set having a specific probability of being nonzero.}

Alternatively, the studies in~\cite{MotaGLOBALSIP14,MotaARXIV14} incorporated side information on CS by means of~$\ell_{1}$\text{-}$\ell_{1}$ minimization and derived bounds on the number of Gaussian measurements required to guarantee perfect
signal recovery. It was shown---both by theory and experiments---that $\ell_1\text{-}\ell_1$ minimization can dramatically
improve the reconstruction performance over CS subject to a good-quality side information~\cite{MotaGLOBALSIP14,MotaARXIV14}.
\textcolor{black}{
The study in~\cite{MansourS15} proposed a weighted-$\ell_{1}$
minimization method to incorporate prior information---in the form of inaccurate support estimates---into CS. The work also provided
bounds on the number of Gaussian measurements for successful recovery when
sufficient support information is available.}
Furthermore, recent studies investigated the performance of CS with single side information in practical applications, namely, compressive video foreground extraction~\cite{MotaICASSP15,MotaARXIV15}, magnetic resonance imaging~\cite{Weizman15}, and mm-wave synthetic aperture radar imaging~\cite{Becquaert2016Radar}.
For instance, the work in \cite{Weizman15} exploits the temporal similarity
in MRI longitudinal scans to accelerate MRI acquisition. In the application
scenario of video background subtraction \cite{MotaICASSP15,MotaARXIV15,Scarlett},
prior information, which is generated from a past frame, is used to reduce
the number of measurements for a subsequent frame.

\textcolor{black}{The problem of sparse signal recovery with prior information also emerges in the context of reconstructing a time sequence of sparse signals from low-dimensional measurements~\cite{ZhangR11,AsifR14,QiuV11,GuoQV14,QiuVLH14,VaswaniZ16}. The study in~\cite{VaswaniZ16} provided a comprehensive overview of the domain reviewing a class of recursive algorithms for recovering a time sequence of sparse signals from a small number of measurements. The problem of estimating a time-varying, sparse signal from streaming
measurements was studied in~\cite{VaswaniZ16,AsifR14}, while the work in \cite{ZhangR11} addressed the recovery problem in the context
of multiple measurement vectors. The problem also appears in the context of robust PCA and online robust PCA~\cite{QiuV11,GuoQV14,QiuVLH14}, a framework that finds important application in video background subtraction. The studies in~\cite{QiuV11,GuoQV14,QiuVLH14} used the modified-CS~\cite{NVaswaniTSP10,NVaswaniTSP12} method to leverage prior knowledge under the condition of slow support and signal value changes.}
\vspace{-7pt}
\textcolor{black}{
\subsection{Motivation}
Recent emerging applications \cite{Taj2011,ChellappaIEEE08,YangIEEE10}, such as visual sensor surveillance and mobile augmented reality, are following a distributed sensing scenario where a plethora of tiny heterogeneous devices collect information from the environment. In certain scenarios, we may deal with very high dimensional data where sensing and processing are reasonably expensive under time and resource constraints. These challenges can be addressed by leveraging the distributed sparse representation of the multiple sources \cite{YangIEEE10}. The problem in this setup is to represent and reconstruct the sparse sources along with exploiting the correlation among them. One of the key questions is how to robustly reconstruct a compressed source from a small number of measurements given side information gleaned from multiple other available sources.}

Existing attempts to incorporate side information in compressed sensing are typically considering one side information signal that is typically of good quality; see for example~\cite{MotaGLOBALSIP14,MotaARXIV14,MotaICASSP15,MotaARXIV15,Weizman15}. However, we are aiming at robustly reconstructing the compressed source given multiple side information signals in scenarios where the multiple sources are changing in time, that is, there are arbitrary side information qualities and unknown correlations among them. The challenge raises key interesting questions for solving the distributed sensing problem:
\begin{itemize}
  \item How can we take advantage of the multiple side information signals? This implies an optimal strategy able to effectively exploit the useful information from the multiple signals as well as adaptively eliminate negative effects of poor side information samples.
  \item How many measurements are required to successfully reconstruct the sparse source given multiple side information signals? This calls for bounds on the number of measurements required to guarantee successful signal recovery.
\end{itemize}
\vspace{-7pt}
\subsection{Contributions}
To address the aforementioned challenges, we propose a reconstruction strategy that leverages multiple side information signals leading to higher signal recovery performance compared to state-of-the-art methods~\cite{Beck09,MotaGLOBALSIP14,NVaswaniTSP10,NVaswaniTSP12,KhajehnejadXAH11}. This paper contributes in a twofold way. \textcolor{black}{Firstly, an efficient reconstruction algorithm that leverages multiple side information (RAMSI) is proposed\nocite{LuongDCC16}. The second contribution is to theoretically establish measurement bounds serve as bounds for the number of measurements required by the
proposed RAMSI algorithm.}

\textcolor{black}{The proposed algorithm solves a generic re-weighted $n$-$\ell_{1}$ minimization problem:}
Per iteration of the reconstruction process, the algorithm adaptively selects optimal weights for the multiple signals. As such, contrary to existing works \cite{MotaGLOBALSIP14,MotaARXIV14,MotaICASSP15,MotaARXIV15,Weizman15}, which exploit only one side information signal, the proposed algorithm can efficiently leverage the correlations among multiple heterogenous sources and adapt on-the-fly to changes of the correlations.

We also establish the measurement bounds for the \textcolor{black}{weighted $n\text{-}\ell_1$ minimization problem, which serve as bounds for the proposed RAMSI algorithm.} The bounds depend on the support of the source signal to be recovered and the correlations between the source signal and the multiple side information signals. The correlations are expressed via the supports of the differences between the source and side information signals. We will show that the weighted $n$-$\ell_{1}$ minimization bounds are sharper compared to those of the classical CS \cite{DonohoTIT06,DonohoCOM06,CandesTIT06} and the $\ell_{1}$-$\ell_{1}$ reconstruction \cite{MotaGLOBALSIP14,MotaARXIV14} methods. These theoretical bounds evidently depict the advantage
of RAMSI to deal with heterogeneous side information signals including possible poor side information signals. Furthermore, we show---both theoretically and practically---that the performance of the method is improved with the number of available side information signals.
\vspace{-7pt}
\subsection{Outline}
The rest of this paper is organized as follows. Section \ref{relatedWork} reviews the background on the related theory as well as the corresponding \textcolor{black}{Gaussian measurement} bounds for CS and CS with side information. The RAMSI algorithm is proposed in Section \ref{RAMSI}, while our bounds and the corresponding analysis are presented in Section \ref{ramsiBounds}. The derived bounds and the performance of RAMSI on different sparse sources are assessed in Section \ref{Experiment} and Section~\ref{conclusion} concludes the work.

\section{\textcolor{black}{Background}}\label{relatedWork}
In this section we review the problem of signal recovery from low-dimensional measurements \cite{DonohoCOM06,CandesTIT06,Beck09,Candes08,Asif13} represented by CS (Sec. \ref{fundamentalRecovery}) and CS with prior information \cite{MotaGLOBALSIP14,MotaARXIV14,Scarlett,MotaICASSP15,Weizman15,WarnellTIP15} (Sec. \ref{weightL1-CSwithSI}). In addition, we present background information with respect to measurement bounds (Sec. \ref{backgroundConvex}).

\subsection{Sparse Signal Recovery}\label{sparseRecovery}
\subsubsection{Compressed Sensing}
\label{fundamentalRecovery}
The problem of low-dimensional signal recovery arises in a wide range of applications such as statistical inference and signal processing. Most signals in such applications have sparse representations in some domain or learned set of basis. Let $\bx\in\mathbb{R}^{n}$ denote a high-dimensional sparse vector. The source $\bx$ can be reduced by sampling via a linear projection at the encoder~\cite{CandesTIT06}. We denote a random measurement matrix by $\mathbf{\Phi}\in \mathbb{R}^{ m\times n}$ (with~$m\ll{n}$), whose elements are sampled from an i.i.d. Gaussian distribution. Thus, we get a measurement vector $\by=\mathbf{\Phi}\bx$, with $\by\in\mathbb{R}^{ m}$. At the decoder, $\bx$ can be recovered by solving the Basis Pursuit problem~\cite{CandesTIT06,DonohoCOM06}:
\begin{equation}\label{l1-norm}
    \min_{\bx} \|\bx\|_{1} \mathrm{~subject~to~} \by=\mathbf{\Phi}\bx,
\end{equation}
where $\|\bx\|_{p}:=(\sum_{i=1}^{n}|x_{i}|^{p})^{1/p}$ is the $\ell_{p}$-norm of $\bx$ wherein $x_{i}$ is an element of $\bx$.
Problem \eqref{l1-norm} becomes an instance of finding a general solution:
\begin{equation}\label{l1-general}
    \min_{\bx}\{H(\bx) = f(\bx) + g(\bx)\},
\end{equation}
where $f:=\mathbb{R}^{n} \rightarrow\mathbb{R}$ is a smooth convex function and~$g:=\mathbb{R}^{n} \rightarrow\mathbb{R}$ is a continuous convex function, which is possibly non-smooth. Problem \eqref{l1-norm} emerges from Problem~\eqref{l1-general} when setting $g(\bx)=\lambda \|\bx\|_{1}$ and $f(\bx)=\frac{1}{2}\|\mathbf{\Phi}\bx-\by\|^{2}_{2}$, with \textcolor{black}{Lipschitz constant} $L_{\nabla f}$ \cite{Beck09}. %here $L_{\nabla f}=\mathrm{eigen}_{\max}(\mathbf{\Phi}^{\mathrm{T}}\mathbf{\Phi})$ \cite{Beck09}, where $\mathrm{eigen}_{\max}(.)$ is the maximum eigenvalue and $\mathbf{\Phi}^{\mathrm{T}}$ is the transpose of $\mathbf{\Phi}$.%
Using proximal gradient methods~\cite{Beck09}, $\bx^{(k)}$ at iteration $k$ is computed as
\begin{equation}\label{l1-proximal}
    \bx^{(k)}= \Gamma_{\frac{1}{L}g}\Big(\bx^{(k-1)}-\frac{1}{L}\nabla f(\bx^{(k-1)})\Big),
\end{equation}
where $L\geq L_{\nabla f}$ and $\Gamma_{\frac{1}{L}g}(\bx)$ is a proximal operator defined as
\begin{equation}\label{l1-proximalOperator}
    \Gamma_{\frac{1}{L}g}(\bx) = \argmin_{\bv \in\mathbb{R}^{n}}\Big\{ \frac{1}{L}g(\bv) + \frac{1}{2}\|\bv-\bx\|^{2}_{2}\Big\}.
\end{equation}

The classical $\ell_{1}$ minimization problem in CS \cite{DonohoTIT06,DonohoCOM06,CandesTIT06} requires $m_{\ell_{1}}$ measurements \cite{Venkat12,MotaGLOBALSIP14,MotaARXIV14} for successful reconstruction, bounded as
 \begin{equation}\label{l1 bound}
    m_{\ell_{1}} \geq 2s_{0}\log\frac{n}{s_{0}} + \frac{7}{5}s_{0} + 1,
%    \vspace{-10pt}
\end{equation}
where $s_{0}:=\|\bx\|_0=|\{i: x_{i}\neq 0\}|$ denotes the number of nonzero elements in $\bx$ as the support of $\bx$, with $|.|$ denoting the cardinality of a set and $\|\cdot\|_0$ being the $\ell_0$-pseudo-norm.
%\vspace{-10pt}
%For example,  if $g(\bv)=\lambda \|\bv\|_{1}$, the proximal operator $\Gamma_{\frac{1}{L}g}(\bx)$ \eqref{l1-proximalOperator} can be computed by: $\Gamma_{\frac{1}{L}g}(\bx) =\max(|\bx|- \lambda/L,0)\mathrm{sign}(\bx)$, where $\mathrm{sign}(.)$ is a sign function.
%\begin{equation}\label{proxCompute}
%    \Gamma_{g}(\bx) = \left\{
%  \begin{array}{ll}
%    \bx-\lambda, & \hbox{$\bx > \lambda $;} \\
%    \bx+\lambda, & \hbox{$\bx < -\lambda$  ;}\\
%    0, & \hbox{$-\lambda \leq \bx \leq\lambda$.}
%  \end{array}
%\right.
%\end{equation}
%\vspace{-10pt}
\subsubsection{CS with $\ell_{1}$-$\ell_{1}$ Minimization}
%\vspace{-7pt}
\label{weightL1-CSwithSI}

The $\ell_{1}$-$\ell_{1}$ minimization approach~\cite{MotaGLOBALSIP14,MotaARXIV14,MotaICASSP15} reconstructs $\bx$ given a signal $\bz \in \mathbb{R}^{n}$ as side information by solving the following problem:
%\vspace{-10pt}
\begin{equation}\label{l1-l1minimization}
    \min_{\bx}\Big\{\frac{1}{2}\|\mathbf{\Phi}\bx-\by\|^{2}_{2} + \lambda (\|\bx\|_{1}+\|\bx-\bz\|_{1})\Big\}.
%    \vspace{-7pt}
\end{equation}

The bound on the number of measurements required by Problem~\eqref{l1-l1minimization} to successfully reconstruct $\bx$ depends on the quality of the side information signal $\bz$ as \cite{MotaGLOBALSIP14,MotaARXIV14,MotaICASSP15}
\begin{equation}\label{l1-l1 bound}
    m_{\ell_{1}\text{-}\ell_{1}} \geq 2\overline{h}\log\Big(\frac{n}{s_{0}+\xi/2}\Big) + \frac{7}{5}\Big(s_{0}+\frac{\xi}{2}\Big) + 1,
\end{equation}
where
\begin{subequations}\label{l1-l1 sparse set}
\begin{align}
\xi:&=|\{i:z_{i}\neq x_{i}=0\}|-|\{i:z_{i}= x_{i}\neq 0\}|\label{l1-l1 sparse setXi} \\
\bar{h}:&=\mathcolor{black}{|\{i:x_{i}>0,x_{i}>z_{i}\}\cup\{i:x_{i}<0,x_{i}<z_{i}\}|}\label{l1-l1 sparse setHBar},
\end{align}
\end{subequations}
wherein $x_{i}$, $z_{i}$ are corresponding elements of $\bx$, $\bz$. It has been shown that Problem \eqref{l1-l1minimization} improves over Problem \eqref{l1-norm} provided that the side information has good enough quality~\cite{MotaGLOBALSIP14,MotaARXIV14}. The quality is expressed by a high number of elements $z_{i}$ that are equal to $x_{i}$, thereby leading to $\xi$ in \eqref{l1-l1 sparse setXi} being small.

\subsection{Background on Measurement Bounds}\label{backgroundConvex}
We introduce some key definitions and conditions in convex optimization as well as linear inverse problems, based on the concepts in \cite{Venkat12,Tropp14}, which are used in the derivation of the measurement bounds for the proposed \textcolor{black}{weighted $n\text{-}\ell_{1}$ minimization approach}.
\subsubsection{Convex Cone}
A \textit{convex cone} $C\subset \mathbb{R}^{n}$ is a convex set that satisfies $C=\tau C$, $\forall\tau\geq0$~\cite{Tropp14}. For the cone $C\subset \mathbb{R}^{n}$, a \textit{polar cone} $C^{\circ}$ is the set of outward normals of $C$, defined by
\begin{equation}\label{polarCone}
C^{\circ}:=\{\bu\in \mathbb{R}^{n}:\bu^{\mathrm{T}}\bx\leq 0, ~\forall\bx\in C\}.
\end{equation}
A \textit{descent cone}~\cite[Definition 2.7]{Tropp14} $\mathcal{D}(g,\bx)$, alias \textit{tangent cone}~\cite{Venkat12}, of a convex function $g:=\mathbb{R}^{n} \rightarrow\mathbb{R}$ at a point $\bx\in \mathbb{R}^{n}$---at which $g$ is not increasing---is defined as
\begin{equation}\label{descentCone}
\mathcal{D}(g,\bx):=\bigcup\limits_{\tau \geq 0}\{\by\in \mathbb{R}^{n}:g(\bx+\tau \by)\leq g(\bx)\},
\end{equation}
where $\bigcup$ denotes the union operator.

\subsubsection{Gaussian Width}
The Gaussian width \cite{Venkat12} is a summary parameter for convex cones; it is used to measure the aperture of a convex cone. For a convex cone $C\subset \mathbb{R}^{n}$, considering a subset $C\cap \mathbb{S}^{n-1}$ where $\mathbb{S}^{n-1}\subset \mathbb{R}^{n}$ is a unit sphere, the \textit{Gaussian width} \cite[Definition 3.1]{Venkat12} is defined as
\begin{equation}\label{gaussianWidth}
\omega(C):=\mathbb{E}_{\bg}[\sup_{\bu \in C\cap \mathbb{S}^{n-1}} \textbf{\textsl{g}}^{\mathrm{T}}\bu].
\end{equation}
\textcolor{black}{where $\bg\sim\mathcal{N}(0,\mathbf I_{n})$ is a vector of $n$ independent, zero-mean, and unit-variance Gaussian random variables and $\mathbb{E}_{\bg}[\cdot]$ denotes the expectation with respect to $\bg$.}
The Gaussian width~\cite[Proposition 3.6]{Venkat12} can further be bounded as
\begin{equation}\label{gaussianWidthBound}
  \omega(C)\leq \mathbb{E}_{\bg}\big[\mathrm{dist}(\textbf{\textsl{g}},C^{\circ})],
\end{equation}
where $\mathrm{dist}(\textbf{\textsl{g}},C^{\circ})$ denotes the Euclidean distance of $\textbf{\textsl{g}}$ with respect to the set $C^{\circ}$, which is in turn defined as
\begin{equation}\label{euclideanDistance}
  \mathrm{dist}(\textbf{\textsl{g}},C^{\circ}):=\min_{\bu}\{\|\textbf{\textsl{g}}-\bu\|_{2}:\bu\in C^{\circ}\}.
\end{equation}

Recently, a new summary parameter called the \textit{statistical dimension} $\delta(C)$ of cone $C$ \cite{Tropp14} is introduced to estimate the convex cone~\cite[Theorem
4.3]{Tropp14}. Using the Gaussian width, the statistical dimension is bounded as~\cite[Proposition 10.2]{Tropp14}
\begin{equation}\label{gaussianWidthStatDim}
  \omega^{2}(C)\leq \delta(C) \leq \omega^{2}(C)+1.
\end{equation}
This relationship gives a convenient bound for the Gaussian width that is to be used in our following computations. The statistical dimension can be expressed in terms of the polar cone $C^{\circ}$ as~\cite[Proposition 3.1]{Tropp14}
\begin{equation}\label{statDimComputePolar}
  \delta(C):=\mathbb{E}_{\bg}\big[\mathrm{dist}^{2}(\textbf{\textsl{g}},C^{\circ})].
\end{equation}
\subsubsection{Measurement Condition}
An optimality condition~\cite[Proposition 2.1]{Venkat12}, \cite[Fact 2.8]{Tropp14} for linear inverse problems states that $\bx_{0}$ is the unique solution of \eqref{l1-general} if and only if
\begin{equation}\label{optimalCondition}
  \mathcal{D}(g,\bx_{0})\cap \mathrm{null}(\mathbf{\Phi})=\{\mathbf{0}\},
\end{equation}
where $\mathrm{null}(\mathbf{\Phi})\hspace{-2pt}:=\hspace{-2pt}\{\bx\in\mathbb{R}^{n}\hspace{-2pt}:\hspace{-2pt}\mathbf{\Phi}\bx\hspace{-2pt}=\hspace{-2pt}\mathbf{0}\}$ is the null space of $\mathbf{\Phi}$. We consider the number of measurements $m$ required to successfully reconstruct a given signal $\bx_{0}\in \mathbb{R}^{n}$. Corollary 3.3 in \cite{Venkat12} states that, given a measurement vector $\by=\mathbf{\Phi}\bx_{0}$, $\bx_{0}$ is the unique solution of \eqref{l1-general} with probability at least $1\hspace{-2pt}-\hspace{-2pt}\exp(-\frac{1}{2}(\sqrt{m}\hspace{-2pt}-\hspace{-2pt}\omega(\mathcal{D}(g,\bx_{0})))^{2})$ provided that $m\geq \omega^{2}(\mathcal{D}(g,\bx_{0}))+1$. Furthermore, combined with the relationship in \eqref{gaussianWidthStatDim}, we can interpret the successful recovery of $\bx_{0}$ in an equivalent condition:
 \begin{equation}\label{measurementCondition}
  m\geq \delta(\mathcal{D}(g,\bx_{0}))+1.
\end{equation}
\subsubsection{Bound on the Measurement Condition}
The key remaining question is how to calculate the statistical dimension $\delta(\mathcal{D}(g,\bx))$ of a descent cone $\mathcal{D}(g,\bx)$.
Using \eqref{statDimComputePolar}, we can calculate $\delta(\mathcal{D}(g,\bx))$ as
 \begin{equation}\label{dimDescent}
     \delta(\mathcal{D}(g,\bx))=\mathbb{E}_{\bg}\big[\mathrm{dist}^{2}(\textbf{\textsl{g}},\mathcal{D}(g,\bx)^{\circ})\big],
\end{equation}
 where $\mathcal{D}(g,\bx)^{\circ}$ is the polar cone of $\mathcal{D}(g,\bx)$ as defined in \eqref{polarCone}. Let us consider that the subdifferential $\partial g$ \cite{JHiriat} of a convex function $g$ at a point $\bx \in \mathbb{R}^{n}$ is given by $\partial g\hspace{-2pt}:=\hspace{-2pt}\{\bu \hspace{-2pt}\in\hspace{-2pt}\mathbb{R}^{n}\hspace{-2pt}: g(\by)\hspace{-2pt}\geq\hspace{-2pt} g(\bx)\hspace{-2pt}+\hspace{-2pt}\bu^{\mathrm{T}}(\by\hspace{-2pt}-\hspace{-2pt}\bx)~\text{for all}~ \by \hspace{-2pt}\in \hspace{-2pt}\mathbb{R}^{n}$\}. From \eqref{dimDescent} and~\cite[Proposition 4.1]{Tropp14}, we obtain an upper bound on $\delta(\mathcal{D}(g,\bx))$ as
 \begin{equation}\label{upperBound}
 \begin{split}
     \delta(\mathcal{D}(g,\bx))&=\mathbb{E}_{\bg}\big[\min_{\tau\geq 0}\mathrm{dist}^{2}(\textbf{\textsl{g}},\tau \hspace{-2pt}\cdot \hspace{-2pt} \partial g(\bx))\big] \\
     &\leq \min_{\tau\geq 0}\mathbb{E}_{\bg}\big[\mathrm{dist}^{2}(\textbf{\textsl{g}},\tau \hspace{-2pt}\cdot \hspace{-2pt} \partial g(\bx))\big].
\end{split}
\end{equation}
In short, we conclude the following proposition.
\begin{proposition}[Measurement bound for a convex norm function]\label{propUpper}
In order to obtain the measurement bound for the recovery condition, $m\geq U_{g}\hspace{-1pt}+\hspace{-1pt}1$, we calculate the quantity $U_{g}$ given a convex norm function $g\hspace{-1pt}:=\hspace{-1pt}\mathbb{R}^{n} \hspace{-2pt}\rightarrow\hspace{-2pt}\mathbb{R}$ by\begin{equation}\label{upperBoundCompute}
U_{g}= \min_{\tau\geq 0}\mathbb{E}_{\bg}\Big[\mathrm{dist}^{2}(\textbf{\textsl{g}}, \tau \cdot\partial g(\bx))\Big].
\end{equation}
\end{proposition}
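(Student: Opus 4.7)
The plan is to assemble this proposition as a direct packaging of the two principal facts already set up in Sec.~\ref{backgroundConvex}: the Gaussian measurement condition \eqref{measurementCondition} and the subdifferential-based upper bound \eqref{upperBound} on the statistical dimension of the descent cone. Since the proposition is essentially a named combination of these two ingredients, the work lies in cleanly chaining them rather than in proving anything fundamentally new.

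First, I would recall \eqref{measurementCondition}: if $\by=\mathbf{\Phi}\bx_{0}$ with $\mathbf{\Phi}$ having i.i.d.\ Gaussian entries, then $\bx_{0}$ is the unique minimizer of \eqref{l1-general} with high probability whenever $m\geq\delta(\mathcal{D}(g,\bx_{0}))+1$. This reduces the task of specifying a sufficient measurement budget to that of bounding the statistical dimension of the descent cone of $g$ at the source signal.

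Next, I would identify the polar cone via the standard convex-analytic representation $\mathcal{D}(g,\bx)^{\circ}=\bigcup_{\tau\geq 0}\tau\cdot\partial g(\bx)$, insert this into \eqref{statDimComputePolar}, and then interchange the minimum over $\tau$ with the Gaussian expectation (as justified in \cite[Prop.~4.1]{Tropp14}) to pass from
\begin{equation*}
\delta(\mathcal{D}(g,\bx))=\mathbb{E}_{\bg}\Big[\min_{\tau\geq 0}\mathrm{dist}^{2}(\bg,\tau\cdot\partial g(\bx))\Big]
\end{equation*}
to the Jensen-type upper bound $\min_{\tau\geq 0}\mathbb{E}_{\bg}\big[\mathrm{dist}^{2}(\bg,\tau\cdot\partial g(\bx))\big]$, which is exactly \eqref{upperBound}. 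Defining $U_{g}$ as this minimized expectation and chaining with \eqref{measurementCondition} then delivers $m\geq U_{g}+1$ as the desired sufficient recovery condition.

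The main (and essentially only) technical point is verifying that the polar identification $\mathcal{D}(g,\bx)^{\circ}=\bigcup_{\tau\geq 0}\tau\cdot\partial g(\bx)$ and the min--expectation swap are legitimate for the convex norm $g$ in question; both rest on standard subdifferential calculus for positively homogeneous convex functions together with the non-negativity of the squared distance, as imported wholesale from \cite{Tropp14}. The substantive effort postponed by Proposition~\ref{propUpper} is the later evaluation of $U_{g}$ in closed form for the weighted $n$-$\ell_{1}$ norm, which requires an explicit description of $\partial g(\bx)$ and a one-dimensional optimization over $\tau$; Proposition~\ref{propUpper} serves as the template for that subsequent computation.
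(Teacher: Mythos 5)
Your proposal is correct and follows essentially the same route as the paper, which states Proposition \ref{propUpper} precisely as the summary of its preceding derivation: combine the recovery condition \eqref{measurementCondition} with the chain \eqref{dimDescent}--\eqref{upperBound}, in which the polar cone $\mathcal{D}(g,\bx)^{\circ}$ is represented via scaled subdifferentials (per \cite[Proposition 4.1]{Tropp14}) and the minimum over $\tau$ is pulled outside the Gaussian expectation in the direction $\mathbb{E}_{\bg}[\min_{\tau\geq 0}(\cdot)]\leq\min_{\tau\geq 0}\mathbb{E}_{\bg}[\cdot]$. The only cosmetic caveats are that the polar identification holds up to closure of $\bigcup_{\tau\geq 0}\tau\cdot\partial g(\bx)$ (immaterial, since the distance to a set equals the distance to its closure) and that the min--expectation swap is the elementary pointwise inequality rather than Jensen's inequality proper; neither affects correctness.
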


\section{Recovery With Multiple Side Information}\label{RAMSI}
\subsection{Problem Statement}\label{problem}

\begin{figure}[t!]
\centering
\setlength{\tabcolsep}{1pt}
\renewcommand{\arraystretch}{0.1}
\subfigure[\vspace{-9pt}\texttt{View\#1}]{\includegraphics[width=0.10\textwidth]{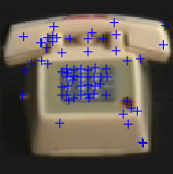}\label{figView1}}\hspace{5pt}
\subfigure[\vspace{-9pt}\texttt{View\#2}]{\includegraphics[width=0.10\textwidth]{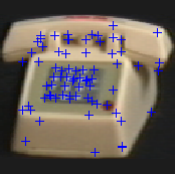}\label{figView2}}\hspace{5pt}
\subfigure[\texttt{View\#3}]{\includegraphics[width=0.10\textwidth]{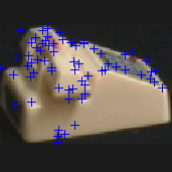}\label{figView3}}
\subfigure[\vspace{-9pt}1000-D histogram $\bx$ of \texttt{View\#1}]{\includegraphics[width=0.45\textwidth]{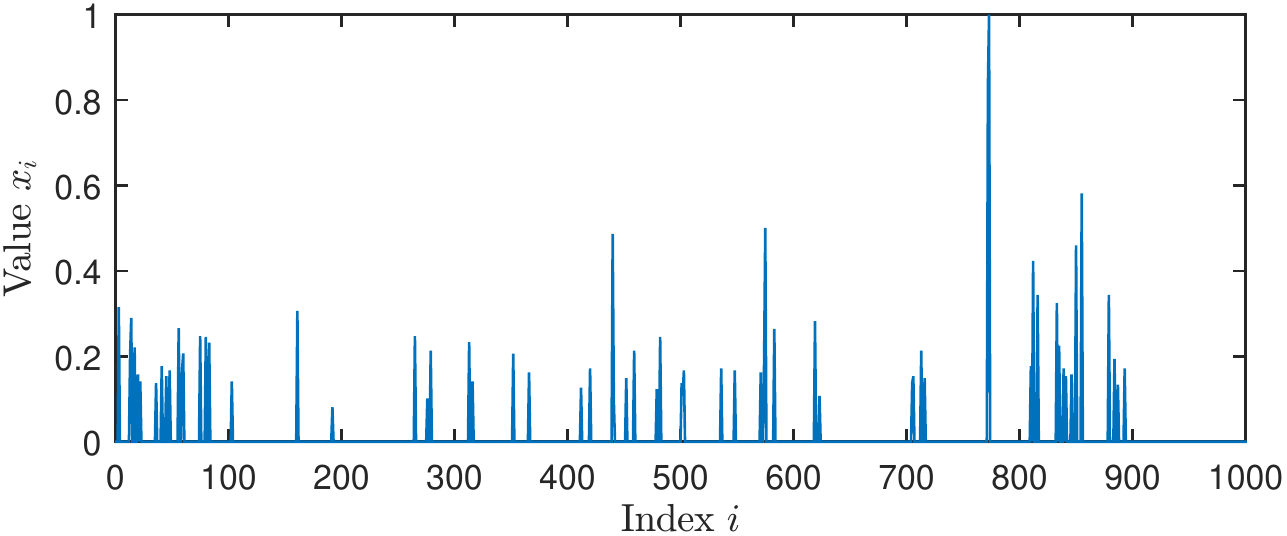}\label{figFeatVecX}}

\subfigure[\vspace{-9pt}1000-D histogram $\bz_{1}$ of \texttt{View\#2}]{\includegraphics[width=0.45\textwidth]{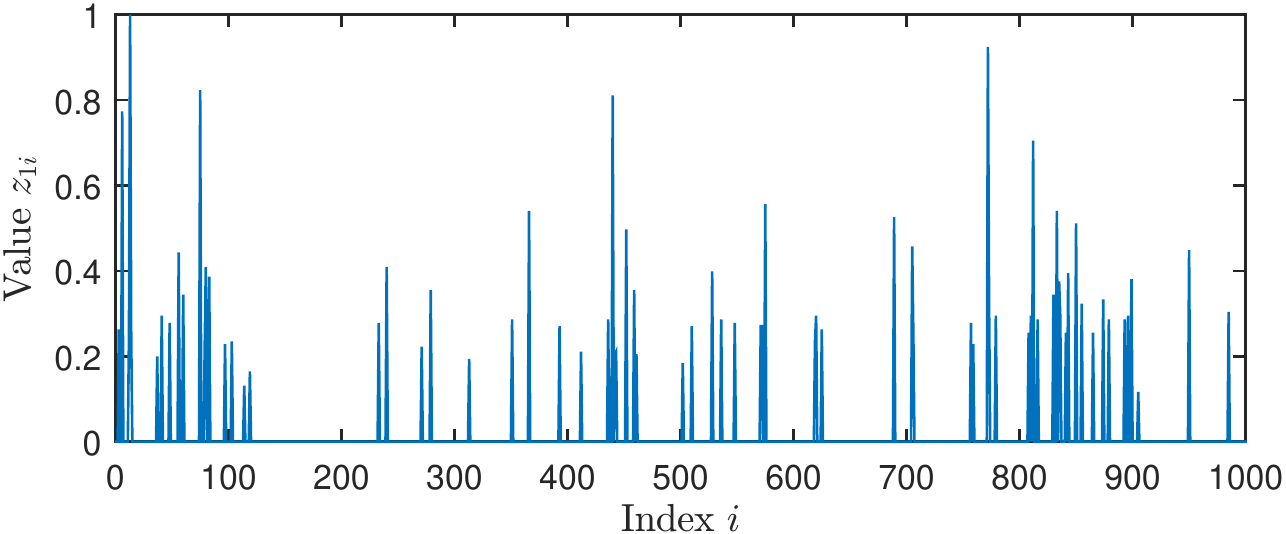}\label{figFeatVecZ1}}

\subfigure[1000-D histogram $\bz_{2}$ of \texttt{View\#3}]{\includegraphics[width=0.45\textwidth]{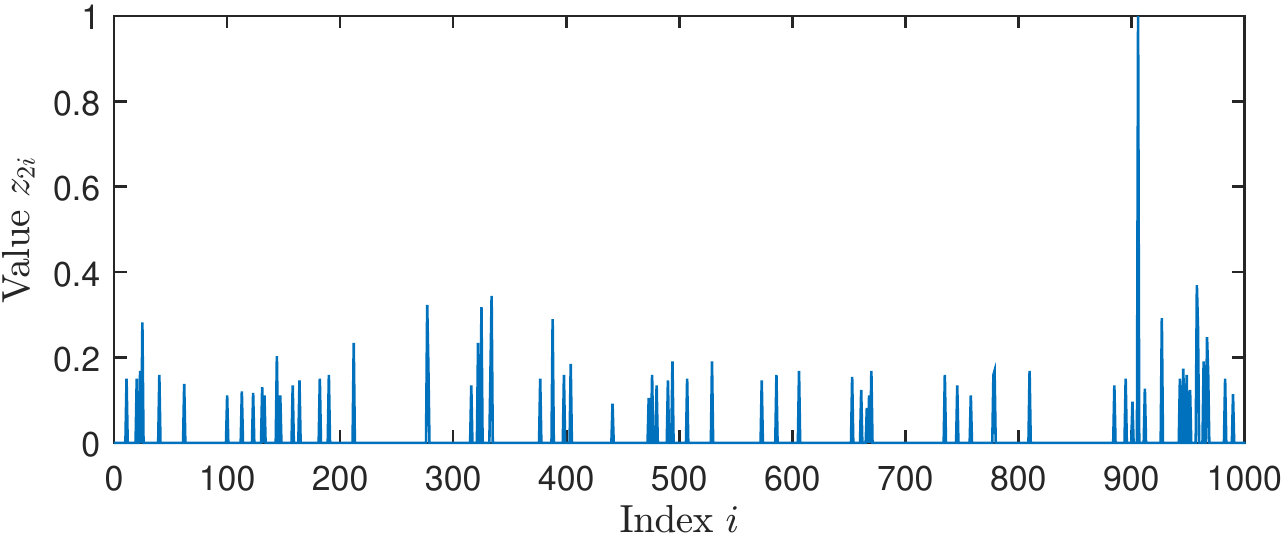}\label{figFeatVecZ2}}
\caption{SIFT-feature histograms of \texttt{Object} \texttt{\#60} in the \texttt{COIL-100} \cite{Nene96}.}\label{figFeatVec}
\end{figure}

\textcolor{black}{Let us consider the scenario of tiny cameras used for multiview object recognition. Fig. \ref{figFeatVec} shows three images from \texttt{View\#1}, \texttt{View\#2}, \texttt{View\#3} of \texttt{Object\#60} in the \texttt{COIL-100} database \cite{Nene96} and the corresponding histogram vectors of the SIFT-feature \cite{DLowe99} points. Each feature histogram is created by extracting all SIFT features from an image and then propagating down a hierarchical vocabulary tree based on a hierarchical $k$-means algorithm \cite{NisterCVPR06}.
It is evident (see Fig.~\ref{figFeatVec}) that the feature histogram acquired
from a given camera can be modeled as a sparse vector $\bx$.}

\textcolor{black}{In a practical distributed camera scenario, the dimensionality of the histograms can be high---in Fig.~\ref{figFeatVec} each histogram vector contains $1000$ elements. Classical CS methods can be deployed to reduce the dimension of each feature histogram with the purpose to convey a low-dimensional measurement vector $\by$ at a central node. However, classical CS does not leverage the inter-camera correlations, namely, the correlation among the histograms from different views. For example, suppose that the decoder has access to~$\bz_1$ and $\bz_2$, which are feature histograms of neighboring views that are naturally correlated with~$\bx$ [see Figs.~\ref{figFeatVecX}, \ref{figFeatVecZ1}, \ref{figFeatVecZ2}]. The $\ell_1\text{-}\ell_1$ minimization framework in Problem~\eqref{l1-l1minimization} can be used to recover $\bx$ from $\by$ given either $\bz_{1}$ or $\bz_{2}$, that is, only one side information vector. Moreover, there may be a chance that the~$\ell_1\text{-}\ell_1$ minimization framework performs worse than the conventional $\ell_{1}$ minimization method because of a poor-quality side information signal.} To address these two limitations, we propose a new \textit{reconstruction algorithm with multiple side information} (RAMSI), which aims at automatically and optimally utilizing information from multiple side information signals. The input of RAMSI is the measurement vector~$\by=\mathbf{\Phi} \bx$ and $J$ given side information signals $\bz_{1},\dots,\bz_{J}\in\mathbb{R}^{n}$.
The objective function of RAMSI is constructed using an $n$-$\ell_{1}$-norm function in Problem \eqref{l1-general}, i.e,
\vspace{-4pt}
\begin{equation}\label{n-l1g}
g(\bx) \hspace{-2pt}= \lambda \sum_{j=0}^{J}\hspace{-2pt}\|\mathbf{W}_{j}(\bx\hspace{-2pt}-\hspace{-2pt}\bz_{j})\|_{1},
    \vspace{-2pt}
\end{equation}
where $\bz_{0}=\mathbf{0}$ and $\mathbf{W}_{j}$ are diagonal weight matrices, $\mathbf{W}_{j}=\mathrm{diag}(w_{j1},w_{j2},...,w_{jn})$, wherein $w_{ji}\hspace{-2pt}>\hspace{-2pt}0$ is the weight in $\mathbf{W}_{j}$ at index $i$. Namely,
the objective function of the proposed weighted $n$-$\ell_{1}$ minimization problem is given by
\vspace{-5pt}
\begin{equation}\label{n-l1minimization}
\min_{\bx}\Big\{H(\bx)=\frac{1}{2}\|\mathbf{\Phi}\bx-\by\|^{2}_{2}+ \lambda \sum\limits_{j=0}^{J}\|\mathbf{W}_{j}(\bx-\bz_{j})\|_{1}\Big\}.
\end{equation}

\subsection{The Proposed RAMSI Algorithm}\label{theProposedRAMSI}
An important question arises when trying to solve the weighted $n$-$\ell_{1}$ minimization problem in \eqref{n-l1minimization}: How can one determine the weight values to improve the reconstruction by effectively leveraging the multiple side information signals? This also calls for a method that avoids recovery performance degradation when the quality of one or more side information signals is poor, namely, their correlation with the source signal of interest decreases. \textcolor{black}{As such, unlike prior studies \cite{MotaGLOBALSIP14,MotaARXIV14,MotaICASSP15}, our method needs to distribute relevant weights across \textit{multiple side information signals}}. We propose to solve the problem in \eqref{n-l1minimization} based on the proximal gradient method \cite{Beck09}, that is, at every iteration $k$ we iterate over a weight (i.e., $\mathbf{W}_{j}$) update step and a data (i.e., $\bx$) computation step.

\textcolor{black}{With the purpose to determine the weights $\{w_{ji}\}$---with $j\in\{0,\dots, J\}$ indexing the side information signal and $i\in\{1,\dots,n\}$ iterating over the data elements---we minimize the objective function $H(\bx)$ in \eqref{n-l1minimization} by considering $\bx$ fixed. To normalize the contribution of each side information signal during the iterative process, we impose a constraint
on the weights. We may have different strategies to update the weights $\{w_{ji}\}$; in this work, we use the constraint $\sum_{j=0}^{J}\mathbf{W}_{j}=\mathbf{I}_n$, where $\mathbf{I}_n$ is the identity matrix with dimensions $n\times n$. Hence, we compute $\{w_{ji}\}$ by separately optimizing Problem \eqref{n-l1minimization} at a given index $i$ of $\mathbf{W}_{j}$ as
\begin{equation}\label{n-l1-weight-minimization}
\min_{\{w_{ji}\}}\{H(\bx)\}= \min_{\{w_{ji}\}}\Big\{\lambda \sum\limits_{j=0}^{J}w_{ji}|x_{i}-z_{ji}|\Big\},
\end{equation}
where $z_{ji}$ is the element of $\bz_{j}$ at index $i$. Using the Cauchy-Schwartz
inequality~\cite{bachmann2012fourier}}, we achieve the minimization of \eqref{n-l1-weight-minimization} when all $w_{ji}|x_{i}-z_{ji}|$ are equal to a positive parameter $\eta_{i}$, that is, $w_{ji}=\eta_{i}/|x_{i}-z_{ji}|$. To ensure that zero-valued components of $|x_{i}-z_{ji}|$ do not prohibit the iterative computation of $w_{ji}$, we add a small parameter $\epsilon>0$ to $|x_{i}-z_{ji}|$; thus we set
\begin{equation}\label{CSl1-weights}
w_{ji}=\frac{\eta_{i}}{|x_{i}-z_{ji}|+\epsilon},
\end{equation}
Under the constraint $\sum_{j=0}^{J}\mathbf{W}_{j}=\mathbf{I}_n$ we obtain that
\begin{equation}\label{csEta}
\eta_{i} =\Big(\sum\limits_{j=0}^{J}\frac{1}{|x_{i}-z_{ji}|+\epsilon}\Big)^{-1}.
\end{equation}
Using \eqref{csEta} we can rewrite each weight $w_{ji}$ as
\begin{equation}\label{n-l1-weights}
w_{ji} = \frac{1}{1+(|x_{i}-z_{ji}|+\epsilon)\Big(\sum\limits_{l=0,l\neq j}^{J}(|x_{i}-z_{li}|+\epsilon)^{-1}\Big)}.
\end{equation}
With $\mathbf{W}_{j}$ fixed, RAMSI subsequently computes $\bx^{(k)}$ at iteration $k$ using \eqref{l1-proximal}, where the proximal operator $\Gamma_{\hspace{-2pt}\frac{1}{L}g}(\hspace{-1pt}x_{i}\hspace{-1pt})$ is computed in the following proposition.
\begin{proposition}\label{propMSI}
 The proximal operator $\Gamma_{\hspace{-2pt}\frac{1}{L}g}(\bx)$ in \eqref{l1-proximalOperator} for the problem of signal recovery with multiple side information, for which $g(\bx)=\lambda \sum_{j=0}^{J}\|\mathbf{W}_{j}(\bx-\bz_{j})\|_{1}$, is given by
\begin{equation}\label{n-l1-proximalOperatorElementFinal}
\Gamma_{\frac{1}{L}g}(x_{i}) = \left\{
\begin{array}{l}
x_{i}-\frac{\lambda}{L}  \sum\limits_{j=0}^{J}w_{ji}(-1)^{\mathfrak{b}(l<j)},\quad\mathrm{~if}\:\:\eqref{n-l1-proximalOperatorElementFinal:1} \\
 z_{li},~~~~~~~~~~~~~~~~~~~~~~~~~~~~~~~\mathrm{if}\:\:\eqref{n-l1-proximalOperatorElementFinal:2}
  \end{array}
\right.
\end{equation}
where
\begin{subequations}
\label{eq:Parent}
\begin{align}
&z_{li}\hspace{-0.5pt}+\hspace{-0.5pt}\frac{\lambda}{L}\hspace{-1pt} \sum\limits_{j=0}^{J}\hspace{-1pt}w_{ji}(\hspace{-1pt}-1\hspace{-1pt})^{\mathfrak{b}(l<j)}\hspace{-1pt}<\hspace{-1pt}x_{i}\hspace{-1pt}<\hspace{-1pt}z_{(l+1)i}\hspace{-1pt}+\hspace{-1pt}\frac{\lambda}{L}\hspace{-1pt} \sum\limits_{j=0}^{J}\hspace{-1pt}w_{ji}(\hspace{-1pt}-1\hspace{-1pt})^{\mathfrak{b}(l<j)} \label{n-l1-proximalOperatorElementFinal:1}\\
&\!\!\!\!\text{and}\nonumber\\
  &z_{li}\hspace{-1pt}+\hspace{-1pt}\frac{\lambda}{L} \hspace{-1pt}\sum\limits_{j=0}^{J}\hspace{-1pt}w_{ji}(\hspace{-1pt}-1\hspace{-1pt})^{\mathfrak{b}(l-1<j)}\hspace{-1pt}\leq \hspace{-1pt}x_{i}\hspace{-1pt}\leq \hspace{-1pt} z_{li}\hspace{-1pt}+\hspace{-1pt}\frac{\lambda}{L} \hspace{-1pt}\sum\limits_{j=0}^{J}\hspace{-1pt}w_{ji}(\hspace{-1pt}-1\hspace{-1pt})^{\mathfrak{b}(l<j)}, \label{n-l1-proximalOperatorElementFinal:2}
  \end{align}
\end{subequations}
and where, without loss of generality, we have assumed that $-\infty =z_{(-1)i}\leq z_{0i}\leq z_{1i}\leq\dots\hspace{-2pt}\leq z_{Ji}\leq z_{(J+1)i}=\infty$, and we have defined a boolean function
\begin{equation}\label{eq:bfunction}
\mathfrak{b}(l<j)=
\left\{
\begin{array}{l}
1,\quad \mathrm{if} \quad l<j \\
0, \quad\mathrm{otherwise}.
\end{array}
\right.
\end{equation}
\end{proposition}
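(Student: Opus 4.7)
The plan is to exploit the fact that the optimization defining the proximal operator decouples completely across coordinates and then solve the resulting scalar problem by subdifferential calculus. Since $g(\bv) = \lambda \sum_{j=0}^{J} \sum_{i=1}^{n} w_{ji}\,|v_i - z_{ji}|$ and the quadratic penalty $\frac{1}{2}\|\bv - \bx\|_2^2$ is also separable, the minimization in \eqref{l1-proximalOperator} reduces, for each coordinate $i$, to the one-dimensional convex program
\begin{equation*}
\min_{v \in \mathbb{R}}\; F_i(v) := \frac{1}{2}(v - x_i)^2 + \frac{\lambda}{L}\sum_{j=0}^{J} w_{ji}\,|v - z_{ji}|.
\end{equation*}
So the entire proposition reduces to characterizing the minimizer of this piecewise-quadratic convex function. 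I would work with the assumed ordering $z_{0i}\leq z_{1i}\leq\dots\leq z_{Ji}$ together with the sentinels $z_{(-1)i}=-\infty$ and $z_{(J+1)i}=\infty$, which partition $\mathbb{R}$ into intervals on which $F_i$ is a smooth quadratic.

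First I would handle the \emph{smooth case} \eqref{n-l1-proximalOperatorElementFinal:1}: if the minimizer $v^\star$ lies strictly inside $(z_{li}, z_{(l+1)i})$, then $F_i$ is differentiable at $v^\star$ and the first-order condition reads
\begin{equation*}
0 = F_i'(v^\star) = (v^\star - x_i) + \frac{\lambda}{L}\sum_{j=0}^{J} w_{ji}\,\mathrm{sgn}(v^\star - z_{ji}).
\end{equation*}
For $v^\star \in (z_{li}, z_{(l+1)i})$ one has $\mathrm{sgn}(v^\star - z_{ji}) = +1$ for $j\leq l$ and $-1$ for $j>l$, which is exactly $(-1)^{\mathfrak{b}(l<j)}$ by the definition in \eqref{eq:bfunction}. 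Solving for $v^\star$ gives the first branch of \eqref{n-l1-proximalOperatorElementFinal}, and the interval constraint $z_{li}<v^\star<z_{(l+1)i}$ translates, after rearrangement, into exactly \eqref{n-l1-proximalOperatorElementFinal:1}.

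Next I would treat the \emph{kink case} \eqref{n-l1-proximalOperatorElementFinal:2}, where $v^\star = z_{li}$. Here $F_i$ is non-differentiable and I would invoke the optimality condition $0 \in \partial F_i(z_{li})$. The subdifferential of $|\cdot - z_{ji}|$ at $z_{li}$ equals $\{(-1)^{\mathfrak{b}(l<j)}\}$ for $j\neq l$ (using the ordering to identify the sign) and equals $[-1,+1]$ for $j=l$. Consequently,
\begin{equation*}
0 \in (z_{li} - x_i) + \frac{\lambda}{L}\sum_{j\neq l} w_{ji}(-1)^{\mathfrak{b}(l<j)} + \frac{\lambda}{L} w_{li}\,[-1,+1],
\end{equation*}
i.e., $x_i$ must lie between the two endpoints obtained by taking the extreme values $\mp 1$ of the $[-1,+1]$ interval. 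Replacing the $j=l$ term by $-w_{li}$ versus $+w_{li}$ and noting the key identity $(-1)^{\mathfrak{b}(l-1<j)} = (-1)^{\mathfrak{b}(l<j)}$ for $j\neq l$, while the two quantities differ only at $j=l$ (where they evaluate to $-1$ and $+1$ respectively), lets me write the two endpoints in the compact symmetric form of \eqref{n-l1-proximalOperatorElementFinal:2}.

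Finally I would verify consistency: as $x_i$ crosses the boundary between a smooth regime \eqref{n-l1-proximalOperatorElementFinal:1} and an adjacent kink regime \eqref{n-l1-proximalOperatorElementFinal:2}, the output $v^\star$ is continuous (this is automatic from strict convexity of $F_i$, but worth checking algebraically as a sanity test). The main obstacle I anticipate is bookkeeping around the parity of $\mathfrak{b}(l<j)$ and, more subtly, the case where several $z_{ji}$ coincide: when ties occur, the ``interior'' interval \eqref{n-l1-proximalOperatorElementFinal:1} collapses to a point, and only the kink branch \eqref{n-l1-proximalOperatorElementFinal:2} applies with a widened interval reflecting the summed weights. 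I would argue that this is already handled by the proposition's non-strict inequalities in \eqref{n-l1-proximalOperatorElementFinal:2} versus the strict ones in \eqref{n-l1-proximalOperatorElementFinal:1}, so no separate treatment is needed, but a brief remark confirming this would close the argument.
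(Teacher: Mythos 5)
Your proof is correct, and for the bulk of the argument it coincides with the paper's: the coordinate-wise separation into the scalar convex program and the smooth-case treatment---differentiating on $(z_{li},z_{(l+1)i})$, identifying $\mathrm{sign}(v-z_{ji})=(-1)^{\mathfrak{b}(l<j)}$, solving the first-order condition, and translating $z_{li}<v^{\star}<z_{(l+1)i}$ into \eqref{n-l1-proximalOperatorElementFinal:1}---are exactly the paper's steps. Where you genuinely diverge is the kink case. The paper proves a dedicated Lemma~\ref{lemmaMin}: it first relaxes the asymmetric condition \eqref{n-l1-proximalOperatorElementFinal:2} to the symmetric bound $|x_i-z_{li}|\leq \frac{\lambda}{L}\sum_{j=0}^{J}w_{ji}$ and then lower-bounds $h(v_i)$ by the elementary inequalities $|a-b|\geq|a|-|b|$ and $-(v_i-z_{li})(x_i-z_{li})\geq-|v_i-z_{li}|\,|x_i-z_{li}|$, concluding that the resulting expression is minimized at $v_i=z_{li}$. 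You instead invoke the exact optimality condition $0\in\partial F_i(z_{li})$, using that the subdifferential of $|\cdot-z_{ji}|$ at $z_{li}$ is $\{(-1)^{\mathfrak{b}(l<j)}\}$ for $j\neq l$ and $[-1,1]$ for $j=l$, together with the parity identity that $(-1)^{\mathfrak{b}(l-1<j)}$ and $(-1)^{\mathfrak{b}(l<j)}$ agree off $j=l$ and contribute $\mp w_{li}$ there. This buys two things. First, it is a necessary-and-sufficient characterization, so it produces exactly the interval \eqref{n-l1-proximalOperatorElementFinal:2} rather than a symmetric relaxation of it. Second, it is airtight where the paper's chain is loose: the paper's lower bound is not tight at $v_i=z_{li}$ (it undershoots $h(z_{li})$ by $2\frac{\lambda}{L}\sum_{j}w_{ji}|z_{ji}-z_{li}|$), so showing that the lower bound is minimized at $z_{li}$ does not, strictly speaking, show that $h$ is---and indeed the same argument run under the relaxed symmetric hypothesis would ``prove'' a false statement, since for $x_i$ in the symmetric interval but outside \eqref{n-l1-proximalOperatorElementFinal:2} the minimizer lies in an adjacent regime. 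Your subdifferential route closes this gap at the cost of a small amount of convex-analysis machinery, whereas the paper's route stays at the level of calculus. Your closing remark on coincident breakpoints $z_{ji}$---the interior interval collapsing and adjacent kink intervals in \eqref{n-l1-proximalOperatorElementFinal:2} abutting so that their union realizes the enlarged subdifferential at the tied point---is a refinement the paper leaves entirely implicit, and it is handled correctly by your argument.
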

with $l\in\{-1,\dots, J\}$.
\begin{proof}
The proof is given in Appendix~\ref{appendixProximal}.
\end{proof}
\vspace{2pt}
We sum up the proposed RAMSI in Algorithm \ref{ramsiAlg}, which is based on a fast iterative soft-thresholding algorithm (FISTA) algorithm \cite{Beck09}. The \textit{Stopping criterion} in Algorithm \ref{ramsiAlg} can be either a maximum iteration number $k_{\max}$, a relative variation of the objective function $H(\bx)$ in \eqref{n-l1minimization}, or a change of the number of nonzero components of the estimate $\bx^{(k)}$. In this work, the relative variation of $H(\bx)$ is chosen as stopping criterion.
\setlength{\textfloatsep}{0pt}% Remove \textfloatsep
\begin{algorithm}[t!]
\DontPrintSemicolon \SetAlgoLined
\textbf{Input}: $\by,~\mathbf{\Phi},~\bz_{1},~\bz_{2},...,~\bz_{J}$;\\
%\KwData{$\by,\mathbf{\Phi},\bz_{1},\bz_{2},...,\bz_{J}$.}%\KwResult{$\bhx$.}
\textbf{Output}: $\bhx$;\\
%\vspace{-1pt}
  \tcp{Initialization of variables and constants.}
%  \vspace{-2pt}
    $\mathbf{W}_{0}^{(1)}\hspace{-1pt}=\hspace{-1pt}\mathbf{I}$; $\mathbf{W}_{j}^{(1)}\hspace{-1pt}=\hspace{-1pt}\mathbf{0}~(1\hspace{-1pt}\leq\hspace{-1pt} j\hspace{-1pt}\leq\hspace{-1pt} J)$;
    %~(1\leq j\leq J)$; $\bz_{-1}=-\bi$; $\bz_{0}=\mathbf{0}$; $\bz_{J+1}=\bi$;\\
% $b(l<j)=1$ if $l<j$ otherwise $b(l<j)=0$ ($-1\leq l \leq J$);\\
   $\bu^{(1)}\hspace{-1pt}=\hspace{-1pt}\bx^{(0)}\hspace{-1pt}=\hspace{-1pt}\mathbf{0}$; $L\hspace{-1pt}=\hspace{-1pt}L_{\nabla f}$; $\lambda,\epsilon \hspace{-1pt}>\hspace{-1pt}0$; $t_{1}\hspace{-1pt}=\hspace{-1pt}1$; $k\hspace{-1pt}=\hspace{-1pt}0$; \\
%   \vspace{-3pt}
  \While{Stopping criterion is false}{
       $k=k+1$; \\
%       \vspace{-3pt}
       \tcp{Solving the solution given the weights.}
        $\nabla f(\bu^{(k)})=\mathbf{\Phi}^{\mathrm{T}}(\mathbf{\Phi} \bu^{(k)}-\by)$; \\
           $\bx^{(k)}= \Gamma_{\frac{1}{L}g}\Big(\bu^{(k)}-\frac{1}{L}\nabla f(\bu^{(k)})\Big)$; $\Gamma_{\frac{1}{L}g}(.)$ is given by \eqref{n-l1-proximalOperatorElementFinal};\\
%\vspace{-3pt}
           \tcp{Computing the updated weights.}
%\vspace{-5pt}
           $w_{ji}^{(k+1)} = \frac{1}{1+(|x_{i}^{(k)}-z_{ji}|+\epsilon)(\sum\limits_{l=0,\neq j}^{J}(|x_{i}^{(k)}-z_{li}|+\epsilon)^{-1})}$;
        %$w_{ji} = 1/\Big(1+(|x_{i}-z_{ji}|+\epsilon)(\sum\limits_{l=0,\neq j}^{J}(|x_{i}-z_{li}|+\epsilon)^{-1})\Big)$;
%\vspace{-5pt}
\\
       \tcp{Updating new values for the next iteration.}
        $t_{k+1}=(1+\sqrt{1+4t_{k}^{2}})/2$;\\
        \vspace{3pt}
        $\bu^{(k+1)}=\bx^{(k)}+\frac{t_{k}-1}{t_{k+1}}(\bx^{(k)}-\bx^{(k-1)})$;\\
%        \vspace{-3pt}
  }
%  \vspace{-3pt}
%\textbf{Output}: $\bhx$=$\bx^{(k)}$;
\Return $\bx^{(k)}$;
\caption{The proposed RAMSI algorithm.}\label{ramsiAlg}
%\vspace{5pt}
\end{algorithm}

\section{\textcolor{black}{Measurement Bounds For Weighted $n\text{-}\ell_{1}$ Minimization }}
\label{ramsiBounds}
We now establish the measurement bounds for weighted $n\text{-}\ell_1$ minimization and analyze them in relation to the bounds for CS in \eqref{l1 bound} and the $\ell_{1}\text{-}\ell_{1}$ minimization method in \eqref{l1-l1 bound}. %The contribution results are represented in Theorem \ref{RAMSIBound} for the RAMSI measurement bound, Corollary \ref{corBoundRelation} for the RAMSI bound relation to the known bounds, and Corollary \ref{boundComparisions} for simpler bounds to evaluate the bounds.
\subsection{Derived Measurement Bound}
\label{boundRAMSI-CS}
% We now theoretically compute the measurement bound $U_{g_{n\text{-}\ell_{1}}}$ % in \eqref{upperBoundCompute}, which corresponds to the number of measurements % required for successful signal recovery via the RAMSI algorithm.
% \vspace{10pt}
\subsubsection{Signal Traits}
\label{signalSetupRAMSI-CS}
\textcolor{black}{We begin our analysis by stating Assumptions \ref{asIndex}, \ref{asIndexZero}, \ref{asPartical} regarding the source $\bx$ with the support $s_{0}$ and the multiple side information signals~$\bz_{j}$, which will help us formalizing our bounds. Let $s_{j}$ denote the support of each difference vector $\bx\hspace{-2pt}-\hspace{-2pt}\bz_{j}$; namely,~$\|\bx-\bz_{j}\|_0=s_{j}$, where $j\in\{0,\dots, J\}$ and $\bz_0=\mathbf{0}$.}
%\begin{assumption}\label{asSparse}
%The original sparse signal $\bx$ has $s_{0}$ nonzero elements and each difference vector $\bx\hspace{-2pt}-\hspace{-2pt}\bz_{j}$ has $s_{j}$ nonzero elements for a given side information signal $\bz_{j}$; namely,~$\|\bx-\bz_{j}\|_0=s_{j}$, where $j\in\{0,\dots, J\}$ and $\bz_0=\mathbf{0}$.
%\end{assumption}}

In addition, without loss of generality, we make the following assumptions:
\textcolor{black}{
\begin{assumption}\label{asIndex}
\textcolor{black}{There are $p>0$ elements in each $\{\bx-\bz_{j}\}_{j=0}^{J}$ vector, the values of which are not equal to zero; meanwhile there are $n-q$, where $q<n$, elements in each $\{\bx-\bz_{j}\}_{j=0}^{J}$ vector, the values of which are equal to zero.}
\end{assumption}}
\textcolor{black}{By properly rearranging their elements in the same ordering for all difference vectors, we can represent the difference vectors as}
\begin{equation}\label{csSource}
\begin{array}{l}
  \hspace{-4pt}   \bx\hspace{-3pt}- \hspace{-3pt}\bz_{0}\hspace{-3pt}= \hspace{-3pt}(x_{1}~~~~~,     ...,x_{p}~~~~~~,      x_{p+1}~~~~~~~~~~,        ...,x_{q}~~~~~~,      0,...,0)\\
  \hspace{-4pt} \bx \hspace{-3pt}- \hspace{-3pt}\bz_{1}\hspace{-3pt}=\hspace{-3pt} (x_{1}\hspace{-3pt}-\hspace{-3pt}z_{11},...,x_{p}\hspace{-3pt}-\hspace{-3pt}z_{1p},x_{p+1}\hspace{-3pt}-\hspace{-3pt}z_{1(p+1)},...,x_{q}\hspace{-3pt}-\hspace{-3pt}z_{1q},0,...,0)\\
   \cdot\cdot\cdot\\
 \hspace{-4pt}     \bx \hspace{-3pt}- \hspace{-3pt}\bz_{ \hspace{-1pt}J}\hspace{-3pt}=\hspace{-3pt} (x_{1}\hspace{-3pt}-\hspace{-3pt}z_{\hspace{-1pt}J1},...,x_{p}\hspace{-3pt}-\hspace{-3pt}z_{\hspace{-1pt}Jp},x_{p+1}\hspace{-3pt}-\hspace{-3pt}z_{\hspace{-1pt}J(p+1)},...,x_{q}\hspace{-3pt}-\hspace{-3pt}z_{\hspace{-1pt}Jq},0,...,0).
\end{array}
\end{equation}
\textcolor{black}{Via Assumption~\ref{asIndex}, it is evident that $p\leq\min\{s_{j}\}$ and $q\geq\max\{ s_{j}\}$. As such, $p$ and $q$ can be seen as parameters that express the common part of the support and of the zero positions across the difference vectors. Furthermore, we assume that}
\textcolor{black}{\begin{assumption}\label{asIndexZero}
Following the rearrangement in~\eqref{csSource}, at a given position $i\in\{p+1,\dots, q\}$, we assume that $d_{i}\in\{1,\dots,J\}$ consecutive elements out of the $J+1$ elements $\{x_{i}-z_{ji}\}_{j=0}^{J}$ are zero. Namely, $\{x_{i}-z_{ji}\}_{j=l_{i}}^{l_{i}+d_{i}-1}=0$, with $l_i\in\{0,\dots,J\}$ being an auxiliary index indicating the start of the zero positions.
\end{assumption}}
\textcolor{black}{In other words, $d_i$ expresses the number of difference vectors~$\{\bx-\bz_{j}\}_{j=0}^{J}$ that have a zero element at the position indexed by $i$.}
Using Assumptions~\ref{asIndex} and~\ref{asIndexZero}, we can express the total number of zero elements in \eqref{csSource} as
\begin{equation}\label{csSparseEquality}
(J+1)(n-q)+\sum\limits_{i=p+1}^{q}d_{i}=(J+1)n-\sum\limits_{j=0}^{J}s_{j}.
\end{equation}

\textcolor{black}{
\begin{assumption}\label{asPartical}
\textcolor{black}{Let $i$ be a given position at the source and multiple side information vectors and, without loss of generality, assume that $-\infty =z_{(-1)i}\leq
z_{0i}\leq z_{1i}\leq\dots\hspace{-2pt}\leq z_{Ji}\leq z_{(J+1)i}=\infty$, where, for convenience, we introduced $z_{(-1)i}$ and $z_{(J+1)i}$ to denote $-\infty$ and $\infty$, respectively. Under Assumptions~\ref{asIndex} and~\ref{asIndexZero}, we assume that $x_{i}\in (z_{l_{i}i},z_{(l_{i}+1)i}]$, with $l_{i}\in\{-1,\dots, J\}$, or alias, $\mathrm{sign}(x_{i}-z_{ji})=(-1)^{\mathfrak{b}(l_{i}<j)}$, where $\mathfrak{b}(l_{i}<j)$ is defined in~\eqref{eq:bfunction}}.
\end{assumption}}

\subsubsection{The Measurement Bound}
These Assumptions  \ref{asIndex}, \ref{asIndexZero}, \ref{asPartical} help us to derive a bound on the number of measurements required by RAMSI to successfully recover the target signal. Our generic bound is defined by Theorem \ref{RAMSIBound}, while a simpler but looser bound is described in Section~\ref{boundAnalysis}.

\begin{theorem}[Measurement bound for compressed sensing with multiple side information]\label{RAMSIBound}
\textcolor{black}{The number of measurements $m_{n\text{-}\ell_{1}}$ required by weighted $n\text{-}\ell_{1}$ minimization to recover the signal $\bx$, given low-dimensional measurements $\by=\mathbf\Phi\bx$ and $J$ side information signals $\bz_{j}$, is bounded as
 \begin{equation}\label{upperBoundComputeCSApplyAEtaSetMeasurement}
     m_{n\text{-}\ell_{1}}\geq 2\bar{a}_{n\text{-}\ell_{1}} \log\frac{n}{\bar{s}_{n\text{-}\ell_{1}}}\hspace{-1pt}+\hspace{-1pt}
     \frac{7}{5}
     \bar{s}_{n\text{-}\ell_{1}}\hspace{-1pt}+\delta_{n\text{-}\ell_{1}}+1,
\end{equation}
where, under Assumptions  \ref{asIndex}, \ref{asIndexZero}, \ref{asPartical}, $\bar{a}_{n\text{-}\ell_{1}}$, $\bar{s}_{n\text{-}\ell_{1}}$, and $\delta_{n\text{-}\ell_{1}}$
are defined as
\begin{subequations}
\begin{align}
\bar{a}_{n\text{-}\ell_{1}}&=\sum\limits_{i=1}^{p}\hspace{-2pt}a_{i}^{2}\label{eq:hbarQuantity}\\
\bar{s}_{n\text{-}\ell_{1}}&=p+\hspace{-4pt}\sum\limits_{i=p+1}^{q}(1-c_{i})\label{eq:sbarnL1}\\
\delta_{n\text{-}\ell_{1}}&=\hspace{-1pt}(\kappa_{n\text{-}\ell_{1}}-1)(\bar{s}_{n\text{-}\ell_{1}}-p),\label{eq:DeltaNL1}
\end{align}
\end{subequations}
wherein $a_{i}=\sum\limits_{j=0}^{J} w_{ji}( -1)^{\mathfrak{b}(l_{i}<j)}$, $c_i=d_{i}\Big(\sum\limits_{j=0}^{J}\frac{\epsilon}{|x_{i}-z_{ji}|+\epsilon}\Big)^{-1}$, and
\begin{equation}\label{eq:KappaFinalDefinition}
\kappa_{n\text{-}\ell_{1}}= \frac{\sqrt{4}\cdot\min \{c_{i}\}}{\sqrt{\pi\log(n/\bar{s}_{n\text{-}\ell_{1}})}(2\cdot\min\{
c_{i}\}-1)}.
\end{equation}}
\end{theorem}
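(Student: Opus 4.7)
The plan is to upper-bound $U_{g}=\min_{\tau\geq 0}\mathbb{E}_{\bg}[\mathrm{dist}^{2}(\bg,\tau\cdot\partial g(\bx))]$ from Proposition~\ref{propUpper} with $g(\bx)=\lambda\sum_{j=0}^{J}\|\mathbf{W}_{j}(\bx-\bz_{j})\|_{1}$ and then invoke the sufficient recovery condition $m\geq U_{g}+1$ from \eqref{measurementCondition}. Because $g$ is separable in $\bx$, both the subdifferential and the squared Euclidean distance decompose coordinatewise. For each index $i$, $(\partial g(\bx))_{i}=\lambda\sum_{j=0}^{J}w_{ji}s_{ji}$ with $s_{ji}\in[-1,1]$ wherever $x_{i}=z_{ji}$ and $s_{ji}=(-1)^{\mathfrak{b}(l_{i}<j)}$ otherwise, the latter sign being fixed by Assumption~\ref{asPartical}. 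Using $\sum_{j}w_{ji}=1$ from the weight construction \eqref{CSl1-weights}--\eqref{n-l1-weights}, this component is a closed interval whose half-width equals $c_{i}=\sum_{j:\,x_{i}=z_{ji}}w_{ji}$, matching the definition in the theorem, and whose left endpoint is $\lambda a_{i}$.

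I would then split $\{1,\ldots,n\}$ along Assumptions~\ref{asIndex}--\ref{asIndexZero}: on $I_{1}=\{1,\ldots,p\}$ the component collapses to the singleton $\{\lambda a_{i}\}$; on $I_{2}=\{p+1,\ldots,q\}$ it is the interval $\lambda[a_{i},a_{i}+2c_{i}]$, centered at $\lambda\tilde a_{i}$ with $\tilde a_{i}:=a_{i}+c_{i}$; and on $I_{3}=\{q+1,\ldots,n\}$ it is $\lambda[-1,1]$. Taking the pointwise minimum over each component and then expectations gives
\[
\mathbb{E}[\mathrm{dist}^{2}(\bg,\tau\partial g(\bx))]=\sum_{i\in I_{1}}(1+\tau^{2}\lambda^{2}a_{i}^{2})+\sum_{i\in I_{2}}\mathbb{E}\!\left[(|g_{i}-\tau\lambda\tilde a_{i}|-\tau\lambda c_{i})_{+}^{2}\right]+\sum_{i\in I_{3}}\mathbb{E}\!\left[(|g_{i}|-\tau\lambda)_{+}^{2}\right].
\]
The $I_{1}$-sum equals $p+\tau^{2}\lambda^{2}\bar a_{n\text{-}\ell_{1}}$, and the $I_{3}$-sum is the classical Gaussian-tail quantity already analyzed in~\cite{Venkat12,MotaARXIV14,MotaGLOBALSIP14}. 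Tuning $\tau\lambda=\sqrt{2\log(n/\bar s_{n\text{-}\ell_{1}})}$ (in place of the $s_{0}$-based choice of classical CS) and invoking the same Mills-ratio/Chernoff estimate on $I_{3}$ yields the leading $2\bar a_{n\text{-}\ell_{1}}\log(n/\bar s_{n\text{-}\ell_{1}})$ term together with the portion of $\tfrac{7}{5}\bar s_{n\text{-}\ell_{1}}$ attributable to $p$.

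The delicate piece is $I_{2}$, where the interval is shifted by $\tau\lambda\tilde a_{i}$ and has half-width only $c_{i}$ instead of $1$. A short direct computation gives $|\tilde a_{i}|\le 1-c_{i}$ because $\tilde a_{i}$ is a $\pm 1$-weighted sum over the non-free positions whose total weight equals $1-c_{i}$; so the implicit hypothesis $c_{i}>1/2$ built into \eqref{eq:KappaFinalDefinition} forces $|\tilde a_{i}|<c_{i}$, and this is the inequality that lets me absorb the Gaussian mean shift into the free width and bound the shifted tail integral by a multiple of the symmetric one. A careful Gaussian calculation then produces exactly the factor $\sqrt{4}\min\{c_{i}\}/[\sqrt{\pi\log(n/\bar s_{n\text{-}\ell_{1}})}(2\min\{c_{i}\}-1)]=\kappa_{n\text{-}\ell_{1}}$. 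Summing over $I_{2}$ this contributes $\kappa_{n\text{-}\ell_{1}}\sum_{i\in I_{2}}(1-c_{i})=\kappa_{n\text{-}\ell_{1}}(\bar s_{n\text{-}\ell_{1}}-p)$, which, after subtracting the $\tfrac{7}{5}(\bar s_{n\text{-}\ell_{1}}-p)$ already folded into $\tfrac{7}{5}\bar s_{n\text{-}\ell_{1}}$, leaves precisely the correction $\delta_{n\text{-}\ell_{1}}=(\kappa_{n\text{-}\ell_{1}}-1)(\bar s_{n\text{-}\ell_{1}}-p)$. Recombining the three group-sums and invoking Proposition~\ref{propUpper} then gives \eqref{upperBoundComputeCSApplyAEtaSetMeasurement}. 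The main technical hurdle is this shifted-Gaussian tail estimate: balancing the Mills-ratio approximation against the mean shift $\tilde a_{i}$ so that the $\tfrac{7}{5}$ and $\kappa_{n\text{-}\ell_{1}}$ constants align exactly is where the most delicate bookkeeping lives, and where any alternative choice of the weight normalization $\sum_{j}\mathbf{W}_{j}=\mathbf I$ would perturb the final expression.
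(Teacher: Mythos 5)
Your blueprint is essentially the paper's own proof: invoke Proposition~\ref{propUpper} with the separable $g$, split the coordinates into the three blocks $\{1,\dots,p\}$, $\{p+1,\dots,q\}$, $\{q+1,\dots,n\}$, exploit that the middle-block interval has half-width $c_i$ and center of magnitude at most $1-c_i<c_i$ (implicitly $c_i>1/2$), tune $\tau=\sqrt{2\log(n/\bar{s}_{n\text{-}\ell_{1}})}$, and extract $\kappa_{n\text{-}\ell_{1}}$ from the shifted tails---this is exactly the paper's route. Two corrections to your bookkeeping. First, on $i\in\{p+1,\dots,q\}$ the subdifferential component is $[b_i-c_i,\,b_i+c_i]$ with center $b_i=\sum_{j\notin\{l_i,\dots,l_i+d_i-1\}}w_{ji}(-1)^{\mathfrak{b}(l_i<j)}$ as in \eqref{b}, not $[a_i,a_i+2c_i]$: since $a_i$ also carries signed weights at the zero positions ($+w_{l_i i}$ for $j=l_i$ and $-w_{ji}$ for $j\in\{l_i+1,\dots,l_i+d_i-1\}$), one has $a_i=(b_i-c_i)+2w_{l_i i}$, so your center $\tilde{a}_i=a_i+c_i$ equals $b_i+2w_{l_i i}\neq b_i$; the property you actually invoke (a $\pm1$-weighted sum over the non-free positions of total weight $1-c_i$) is precisely $b_i$, so the slip is self-correcting but the labels are wrong. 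Second, the ``careful Gaussian calculation'' you defer is the technical heart of the paper's proof: the tails $\mathcal{A}(\tau(b_i+c_i))$ and $\mathcal{B}(\tau(b_i-c_i))$ are bounded by $\psi(x)/x$ via \eqref{spetralInequality}, and then Lemma~\ref{lemmaPsi} (a Bernoulli-inequality estimate) splits each into a $(1-x^2)/(\sqrt{2\pi}\,\tau x)$ piece---which, with $b_i\le 1-c_i$ and the fact that $c/(2c-1)$ is maximized at $\min\{c_i\}$, gives $\kappa_{n\text{-}\ell_{1}}(\bar{s}_{n\text{-}\ell_{1}}-p)$---plus an $x\,\psi(\tau)/\tau$ piece that merges with the last block into $(n-\bar{s}_{n\text{-}\ell_{1}})\,2\psi(\tau)/\tau$, yielding $\tfrac{2}{5}\bar{s}_{n\text{-}\ell_{1}}$ via \eqref{logInequality}. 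Consequently your final accounting sentence is off: the rewriting is $p+\kappa_{n\text{-}\ell_{1}}(\bar{s}_{n\text{-}\ell_{1}}-p)=\bar{s}_{n\text{-}\ell_{1}}+(\kappa_{n\text{-}\ell_{1}}-1)(\bar{s}_{n\text{-}\ell_{1}}-p)$, i.e., one subtracts $1\cdot(\bar{s}_{n\text{-}\ell_{1}}-p)$, not $\tfrac{7}{5}(\bar{s}_{n\text{-}\ell_{1}}-p)$; the $\tfrac{7}{5}$ arises as $1+\tfrac{2}{5}$ only after the merge above. These are repairable slips in an otherwise faithful reconstruction, not a different method.
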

\textcolor{black}{\begin{proof}
The proof is given in Appendix~\ref{proof:ProofTheorem4}.
\end{proof}}

\subsection{Further Analysis and Comparison with Known Bounds}\label{boundAnalysis}

We now relate the derived bound in Theorem~\ref{RAMSIBound} with the bounds of compressed sensing and compressed sensing with prior information, which are reported in Sec. \ref{sparseRecovery}.
\begin{corollary}[Relations with the known bounds]\label{corBoundRelation}
The RAMSI bound $m_{n\text{-}\ell_{1}}$ in \eqref{upperBoundComputeCSApplyAEtaSetMeasurement} becomes
\begin{enumerate}[\upshape (a)]
  \item\label{cs-l1Relation} the $\ell_{1}$-minimization bound $m_{\ell_{1}}$ in \eqref{l1 bound}, when $\mathbf{W}_{0}=\mathbf{I}_n$ and $\mathbf{W}_{j}=\mathbf{0}$ for $j\in\{1,\dots,J\}$, that is,
\begin{equation}\label{RAMSI-l1 bound}
m_{n\text{-}\ell_{1}}\equiv m_{\ell_{1}}\geq 2s_{0}\log\frac{n}{s_{0}}+ \frac{7}{5}s_{0} + 1,
\end{equation}

\item\label{cs-l1-l1Relation} the $\ell_1\text{-}\ell_1$ minimization bound $m_{\ell_{1}\text{-}\ell_{1}}$ in \eqref{l1-l1 bound}, when $\mathbf{W}_{0}=\mathbf{W}_{1}=\frac{1}{2}\mathbf{I}_n$ and $\mathbf{W}_{j}=\mathbf{0}$ for $j\in\{2,\dots,J\}$, that is,
\begin{equation}\label{RAMSI-l1l1 bound}
    m_{n\text{-}\ell_{1}}\equiv m_{\ell_{1}\text{-}\ell_{1}} \geq 2\bar{h}\log\frac{n}{\bar{s}_{\ell_{1}\text{-}\ell_{1}}}
+\frac{7}{5}\bar{s}_{\ell_{1}\text{-}\ell_{1}}+1,
\end{equation}
where $\bar{h}$ is given by \eqref{l1-l1 sparse setHBar} and $\bar{s}_{\ell_{1}\text{-}\ell_{1}}=\frac{s_{0}+s_{1}}{2}$.
\end{enumerate}
\end{corollary}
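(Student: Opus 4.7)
My plan is to specialize each of the three quantities $\bar{a}_{n\text{-}\ell_{1}}$, $\bar{s}_{n\text{-}\ell_{1}}$, $\delta_{n\text{-}\ell_{1}}$ appearing on the right-hand side of \eqref{upperBoundComputeCSApplyAEtaSetMeasurement} to each of the two weight choices, and then check that the resulting expression matches \eqref{RAMSI-l1 bound} and \eqref{RAMSI-l1l1 bound} respectively. Throughout, I will read $c_i$ as the total weight the scheme places on those indices $j$ at which $x_i=z_{ji}$, which is the natural specialization of the formula appearing in Theorem~\ref{RAMSIBound} to fixed weight matrices. Since the optimization problem and the structural parameters $p$, $q$, $d_i$ of Assumptions~\ref{asIndex}--\ref{asPartical} are insensitive to the values of those $\bz_j$ for which $\mathbf{W}_j=\mathbf{0}$, I will fix such inactive $\bz_j$ equal to $\mathbf{0}$ for definiteness; the bound is unaffected.

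For part~\eqref{cs-l1Relation}, only $\bz_0=\mathbf{0}$ is active, so every difference vector in~\eqref{csSource} reduces to $\bx$, forcing $p=q=s_0$ and emptying the mixed-region sum in \eqref{eq:sbarnL1}. Hence $\bar{s}_{n\text{-}\ell_{1}}=p=s_0$ and, via \eqref{eq:DeltaNL1}, $\delta_{n\text{-}\ell_{1}}=0$. On each of the $p=s_0$ common-support positions $a_i=1\cdot(-1)^{\mathfrak{b}(l_i<0)}\in\{+1,-1\}$, so $a_i^2=1$ and $\bar{a}_{n\text{-}\ell_{1}}=s_0$. Substituting these into \eqref{upperBoundComputeCSApplyAEtaSetMeasurement} reproduces \eqref{RAMSI-l1 bound} term by term.

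For part~\eqref{cs-l1-l1Relation}, only $\bz_0=\mathbf{0}$ and $\bz_1$ are active, so the two effective difference vectors are $\bx$ and $\bx-\bz_1$. Consequently $p$ equals the size of the common non-zero support of these two vectors, the mixed region has size $q-p=(s_0-p)+(s_1-p)$, and at each mixed index exactly one of the two differences vanishes, carrying weight $1/2$; hence $c_i=1/2$ uniformly on that region. From this, $\bar{s}_{n\text{-}\ell_{1}}=p+(q-p)/2=(s_0+s_1)/2$, which in turn equals $s_0+\xi/2$ after a short inclusion-exclusion count that partitions indices by the four overlap classes of $(\bx,\bz_1)$ implicit in $\xi$ of \eqref{l1-l1 sparse setXi}. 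On the common-support positions, Assumption~\ref{asPartical} yields $a_i=\tfrac{1}{2}[\mathrm{sign}(x_i)+\mathrm{sign}(x_i-z_{1i})]$, so $a_i^2=1$ precisely on the sign-agreement set $\{i:x_i>0,\,x_i>z_{1i}\}\cup\{i:x_i<0,\,x_i<z_{1i}\}$ defining $\bar h$ in \eqref{l1-l1 sparse setHBar}, whence $\bar{a}_{n\text{-}\ell_{1}}=\bar h$.

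The hard part will be confirming $\delta_{n\text{-}\ell_{1}}=0$ in case~\eqref{cs-l1-l1Relation}: with $\min\{c_i\}=1/2$, the denominator in \eqref{eq:KappaFinalDefinition} vanishes and a naive substitution makes $\kappa_{n\text{-}\ell_{1}}$ undefined, so the formula must be interpreted with care. My plan is to return to the derivation of $\delta_{n\text{-}\ell_{1}}$ inside Appendix~\ref{proof:ProofTheorem4}, where it arises as an upper estimate on a Gaussian-concentration correction coming from the tail of the expected squared distance in \eqref{upperBoundCompute}; the blow-up of \eqref{eq:KappaFinalDefinition} at $\min\{c_i\}=1/2$ merely signals that this estimate is vacuous on the symmetric boundary. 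Under the balanced weighting $w_{0i}=w_{1i}=1/2$, however, the subdifferential of $g$ contributes identical deterministic mass at every mixed-region coordinate, so the underlying correction collapses to zero by a direct symmetry evaluation, bypassing \eqref{eq:KappaFinalDefinition}. With this step in place, the specialized bound is exactly \eqref{RAMSI-l1l1 bound}, completing the corollary.
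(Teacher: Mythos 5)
Your proposal is correct and takes essentially the same route as the paper: part (a) is the same direct substitution ($p=q=s_0$, $\bar{a}_{n\text{-}\ell_{1}}=\bar{s}_{n\text{-}\ell_{1}}=s_0$, $\delta_{n\text{-}\ell_{1}}=0$), and for part (b) your ``direct symmetry evaluation bypassing \eqref{eq:KappaFinalDefinition}'' is precisely what the paper does---it returns to \eqref{distanceToSubdifferentialExpectationCSRepresent}, splits the mixed indices into the sets $\mathds{I}_{1},\mathds{I}_{2}$ where $(b_i+c_i,\,b_i-c_i)$ equals $(1,0)$ or $(0,-1)$, evaluates $\mathcal{A}(0)=\mathcal{B}(0)=1/2$ exactly while bounding $\mathcal{A}(\tau)$ and $\mathcal{B}(-\tau)$ by $\psi(\tau)/\tau$, uses $p+q=s_0+s_1$ and $\sum_{i=1}^{p}a_i^2=\bar{h}$, and sets $\tau=\sqrt{2\log(n/\bar{s}_{\ell_{1}\text{-}\ell_{1}})}$ with \eqref{logInequality}, so no $\delta$ term ever arises. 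Your flagging of the degeneracy at $\min\{c_i\}=1/2$ is accurate (indeed $c_i^2-b_i^2=0$ already breaks the intermediate step \eqref{upperBoundComputeCSApplyPsiMore}) and correctly explains why the paper re-derives the bound for this weight choice rather than substituting into Theorem~\ref{RAMSIBound}.
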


\textcolor{black}{\begin{proof}
The proof is given in Appendix~\ref{proof:ProofTheorem4}.
\end{proof}}

We now theoretically evaluate our bound for weighted $n\text{-}\ell_{1}$ minimization with the $\ell_1\text{-}\ell_1$ minimization~\cite{MotaGLOBALSIP14,MotaARXIV14,MotaICASSP15} bound in \eqref{l1-l1 bound}. % and decreases the bound $m_{n\text{-}\ell_{1}}$ \eqref{upperBoundComputeCSApplyAEtaSetMeasurement}.
To this end, we derive a simple bound in \eqref{boundCSAppro}, which approximates our bound in \eqref{upperBoundComputeCSApplyAEtaSetMeasurement}. Furthermore, we introduce two looser bounds---one for each method---that are independent from the values of~$\bx,~\bz_{j}$.

\textbf{The simpler bound}. Our bound for weighted $n\text{-}\ell_{1}$ minimization in \eqref{upperBoundComputeCSApplyAEtaSetMeasurement} becomes approximately
\begin{equation}\label{boundCSAppro}
\widetilde{m}_{n\text{-}\ell_{1}}\geq 2\bar{a}_{n\text{-}\ell_{1}} \log\frac{n}{p}\hspace{-1pt}+\hspace{-1pt}\frac{7}{5}p+1.
\end{equation}
\textcolor{black}{Our approximation has the following reasoning: Firstly, Lemma
\ref{Proposition delta} in Appendix \ref{ap:BoundApprox} proves that the $\delta_{n\text{-}\ell_{1}}$
term in \eqref{upperBoundComputeCSApplyAEtaSetMeasurement} is negative; hence, the bound in~\eqref{boundCSAppro} is looser. Secondly, since $\epsilon>0$ is very small, we have $c_{i}\rightarrow\ 1^-$ [see Lemma \ref{Proposition
delta} for a quick explanation]. Consequently, $\bar{s}_{n\text{-}\ell_{1}}\approx p$ and $\delta_{n\text{-}\ell_{1}}\approx
0$ from~\eqref{eq:sbarnL1} and~\eqref{eq:DeltaNL1}, respectively; thereby, leading to our approximation. This simple bound is easier to evaluate compared to the bound in \eqref{upperBoundComputeCSApplyAEtaSetMeasurement},
as we only need to compute $\bar{a}_{n\text{-}\ell_{1}}$
and $p$. Furthermore, according to \eqref{eq:hbarQuantity} and Assumption \ref{asIndex}, we can write that $\bar{a}_{n\text{-}\ell_{1}}\leq p\leq\min\{s_{j}\}$, that is, $p$ in \eqref{boundCSAppro} is smaller
than $s_{0}$ in \eqref{l1 bound} and $\bar{s}_{\ell_{1}\text{-}\ell_{1}}$
in \eqref{RAMSI-l1l1 bound}. Consequently, the simple bound in
\eqref{boundCSAppro} for weighted $n\text{-}\ell_1$ minimization is sharper than the $\ell_1$ minimization and the $\ell_1\text{-}\ell_1$ minimization bounds, which means that the bound in~\eqref{upperBoundComputeCSApplyAEtaSetMeasurement} is even sharper.}

\textbf{The looser bounds}. Weighted $n\text{-}\ell_1$ minimization and $\ell_1\text{-}\ell_1$ minimization
have looser bounds that are independent from the values of $\bx,~\bz_{j}$, as given by
\begin{equation}\label{boundCSApproLoosest}
\widehat{m}_{n\text{-}\ell_{1}}\geq 2p\log\frac{n}{p}+\frac{7}{5}p+1,
\end{equation}
\begin{equation}\label{boundL1L1Loosest}
\widehat{m}_{\ell_{1}\text{-}\ell_{1}}\geq 2\rho\log\frac{n}{\bar{s}_{\ell_{1}\text{-}\ell_{1}}}
+\frac{7}{5}\bar{s}_{\ell_{1}\text{-}\ell_{1}}+1,
\end{equation}
where $p$ is defined in Assumption~\ref{asIndex}, $\rho=\min\{s_{0},s_{1}\}$, and $\bar{s}_{\ell_{1}\text{-}\ell_{1}}=\frac{s_{0}+s_{1}}{2}$.
\textcolor{black}{We obtain these bounds as follows:\ The bounds in \eqref{boundCSAppro} and \eqref{RAMSI-l1l1 bound} depend on the values of $\bx$ and $\bz_{j}$, with $j\in\{1,\dots,J\}$, via the quantities $\bar{a}_{n\text{-}\ell_{1}}$ and $\bar{h}$, respectively. From \eqref{eq:hbarQuantity} and \eqref{l1-l1 sparse setHBar}, we observe that $\bar{a}_{n\text{-}\ell_{1}}\leq p$ and $\bar{h}\leq \min\{s_{0},s_{1}\}$. Hence, by replacing $\bar{a}_{n\text{-}\ell_{1}}$
and $\bar{h}$ with their maximum value leads to the looser bounds in \eqref{boundCSApproLoosest} and in \eqref{boundL1L1Loosest}.}

\textcolor{black}{The bounds in \eqref{boundCSApproLoosest} and \eqref{boundL1L1Loosest} reveal the advantage of using multiple side information signals in compressed sensing: It is evident that, by definition, $p\leq\rho$; hence, $\widehat{m}_{n\text{-}\ell_{1}}\leq\widehat{m}_{\ell_{1}\text{-}\ell_{1}}$. Moreover, the higher the number of side information signals the smaller the bound is expected to become (because $p\leq\min\{s_{j}\}$).}

\textcolor{black}{Finally, according to \eqref{boundL1L1Loosest}, if the side information signal $\bz_{1}$ is not good enough, i.e., if $s_{1}\gg s_{0}$, then $\widehat{m}_{\ell_{1}\text{-}\ell_{1}}>m_{\ell_{1}}$ because of $\bar{s}_{\ell_{1}\hspace{-1pt}\text{-}\ell_{1}}\hspace{-2pt}\gg\hspace{-2pt}s_{0}$, thereby highlighting the limitations of the $\ell_{1}\text{-}\ell_{1}$ minimization method compared to the proposed weighted $n\text{-}\ell_{1}$ minimization
approach.}

\section{Experimental Results}
\label{Experiment}
We present numerical experiments that demonstrate the established bounds and the performance of the RAMSI algorithm on sparse signals with different characteristics. We also analyze how the quality of the side information signals impacts the number of measurements required by RAMSI to recover the target signal. In addition, we evaluate RAMSI using real correlated feature histogram vectors, extracted from a multiview image database \cite{Nene96}.
\subsection{Experimental Setup}\label{numericalExperiments}
%\vspace{-7pt}
We consider the reconstruction of a generated sparse source $\bx$ given known side information signals $\bz_{j}$, $j\in\{1,2,3\}$. We generate $\bx$ with $n \hspace{0pt}= \hspace{0pt}1000$, and with support size $s_{0}\hspace{0pt}=\hspace{0pt}128$, from the i.i.d. zero-mean, unit-variance Gaussian distribution. Firstly, we consider the scenario where the side information signals $\bz_{j}$ are well-correlated to the source $\bx$ leading to a small number of nonzero elements in the vectors ${\bx\hspace{0pt}-\hspace{0pt}\bz_{j}}$. In this scenario, the side information signals, $\bz_{j}$, $j\in\{1,2,3\}$, are generated satisfying $s_{j}=\|\bx-\bz_{j}\|_{0}=64$. Moreover, similar to the setup in \cite{MotaGLOBALSIP14,MotaARXIV14}, a parameter is controlling the number of positions of nonzeros for which both $\bx$ and $\bx\hspace{0pt}-\hspace{0pt}\bz_{j}$ coincide. Let us denote this number by $r_{j}$. For instance, if $r_{j}\hspace{0pt}=\hspace{0pt}51$, $\bx$ has 51 nonzero positions that coincide with 51 nonzero positions of $\bx\hspace{0pt}-\hspace{0pt}\bz_{j}$. This incurs a significant error between the source and the side information, which is given by $\|\bz_{j}\hspace{0pt}-\hspace{0pt}\bx\|_{2}/\|\bx\|_{2}\hspace{0pt} \approx \hspace{0pt}0.56$.

To assess the performance of the algorithm and the bounds when the quality of the side information is poor, we generate side information signals that are not well-correlated to the source $\bx$. Their poor qualities are expressed via higher values of $s_{j}$; for example, $s_{j}\hspace{0pt}=\hspace{0pt}256$ and $s_{j}\hspace{0pt}=\hspace{0pt}352$. Furthermore, we set $r_{j}\hspace{0pt}=\hspace{0pt}128$, namely, 128 nonzero positions of $\bx$ coincide with 128 nonzero positions out of the total 256 or 352 nonzero positions of the side information signals, $\bz_{j}$, $j\in\{1,2,3\}$. This leads to very high errors, e.g., $\|\bz_{j}\hspace{0pt}-\hspace{0pt}\bx\|_{2}/\|\bx\|_{2}\hspace{0pt} \approx \hspace{0pt}1.12$ for $s_{j}\hspace{0pt}=\hspace{0pt}256$, and the supports $s_{j}$ of $\bx\hspace{0pt}-\hspace{0pt}\bz_{j}$ are much higher than that of $\bx$. To constrain the number of cases, we set all $s_{j}$ equal.

Furthermore, we evaluate the algorithm using real experimental data, namely, feature histograms extracted from a multiview image database. Given an image, its feature histogram is formed as in Section \ref{problem}. The size of the vocabulary tree \cite{NisterCVPR06} depends on the value of $k$ and the number of hierarchies, for example, when $k\hspace{0pt} = \hspace{0pt}10$ and 3 hierarchies are used leading to $n \hspace{0pt}= \hspace{0pt}1000$. Because of the small number of features in a single image, the histogram vector $\bx$ is indeed sparse. In our setup, $\bx$ is first projected into the compressed vector $\by$ which is to be sent to the decoder. At the joint decoder, we recover $\bx$ given already decoded histograms of neighbor views, e.g., $\bz_{j}$, $j\in\{1,2,3\}$. In this work, we use the \texttt{COIL-100} dataset \cite{Nene96} containing multiview images of 100 small objects with different angle degrees. To ensure that our experimental setup reflects a realistic scenario, we randomly select 4 neighboring views from \texttt{Object} \texttt{\#16} [Fig. \ref{figObj16}] and \texttt{Object} \texttt{\#60} [Fig. \ref{figObj60}] in the \texttt{COIL-100}~\cite{Nene96} multiview database. Specifically, the four neighboring views are assigned as $\bz_{1}$, $\bx$, $\bz_{2}$, $\bz_{3}$, respectively, of which the third side information $\bz_{3}$ is set to the furthest neighbor of the source $\bx$.

%\vspace{-7pt}
\subsection{Performance Evaluation}\label{performance}
%\vspace{-7pt}
\label{Multi-ViewRecognition}

%\vspace{-10pt}
\subsubsection{Synthetic Signal Reconstruction}\label{generatedPer}
%%\vspace{-7pt}
%\label{performance}
%\vspace{-0pt}
\begin{figure}[t!]
  \centering
\subfigure[$s_{0}\hspace{0pt}=\hspace{0pt}128$, $s_{1}\hspace{0pt}=\hspace{0pt}s_{2}\hspace{0pt}=\hspace{0pt}s_{3}\hspace{0pt}=\hspace{0pt}64$]{\includegraphics[width=0.48\textwidth]{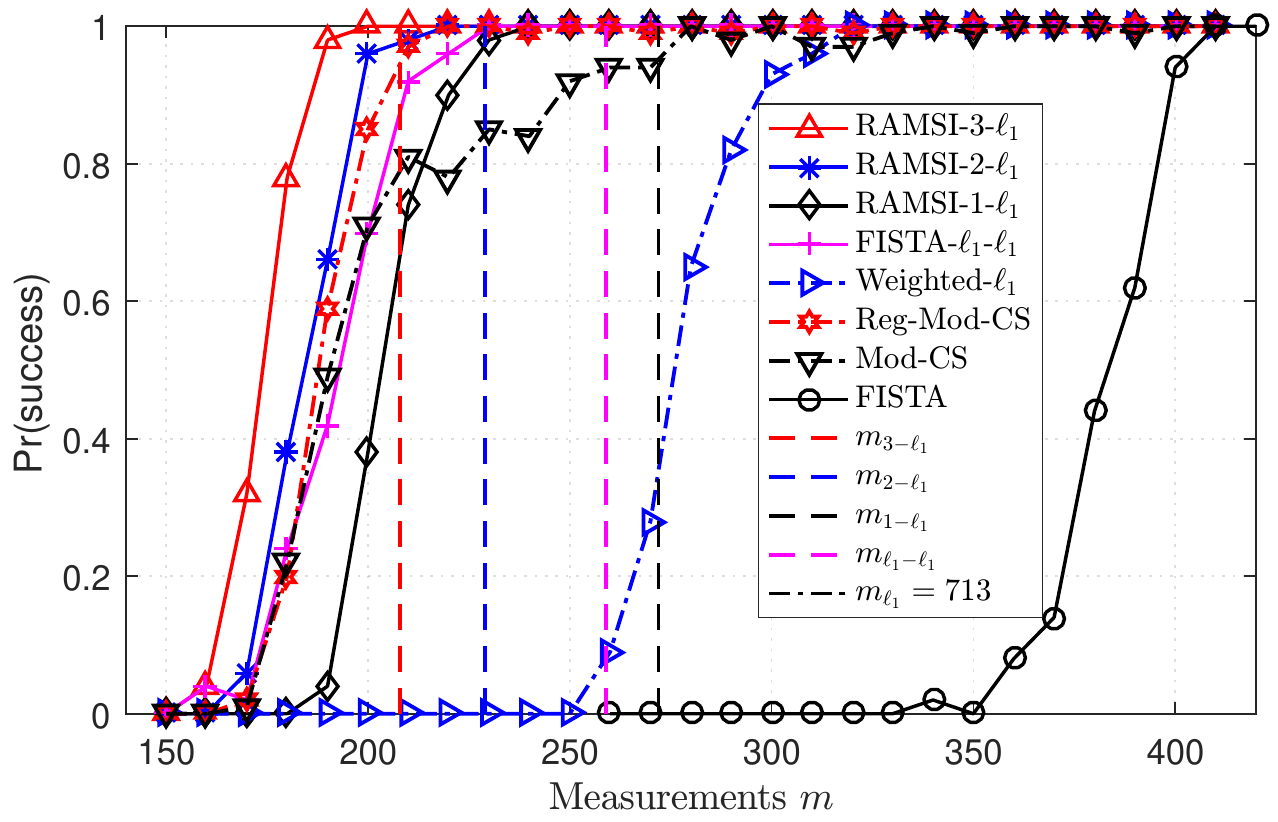}\label{figS64}} \subfigure[$s_{0}\hspace{0pt}=\hspace{0pt}128$, $s_{1}\hspace{0pt}=\hspace{0pt}s_{2}\hspace{0pt}=\hspace{0pt}s_{3}\hspace{0pt}=\hspace{0pt}256$]{\includegraphics[width=0.48\textwidth]{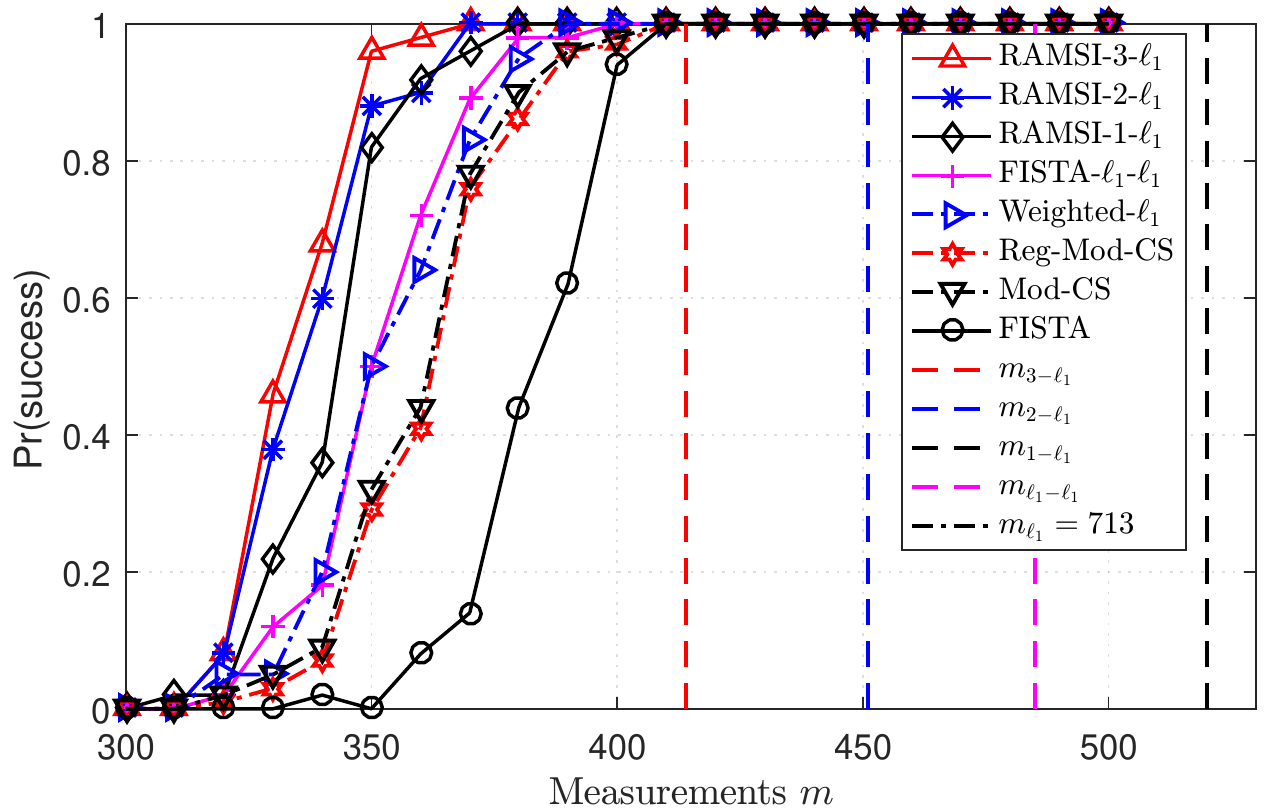}\label{figS256}} %\hspace{0pt}
\subfigure[$s_{0}\hspace{0pt}=\hspace{0pt}128$, $s_{1}\hspace{0pt}=\hspace{0pt}s_{2}\hspace{0pt}=\hspace{0pt}s_{3}\hspace{0pt}=\hspace{0pt}352$]{\includegraphics[width=0.48\textwidth]{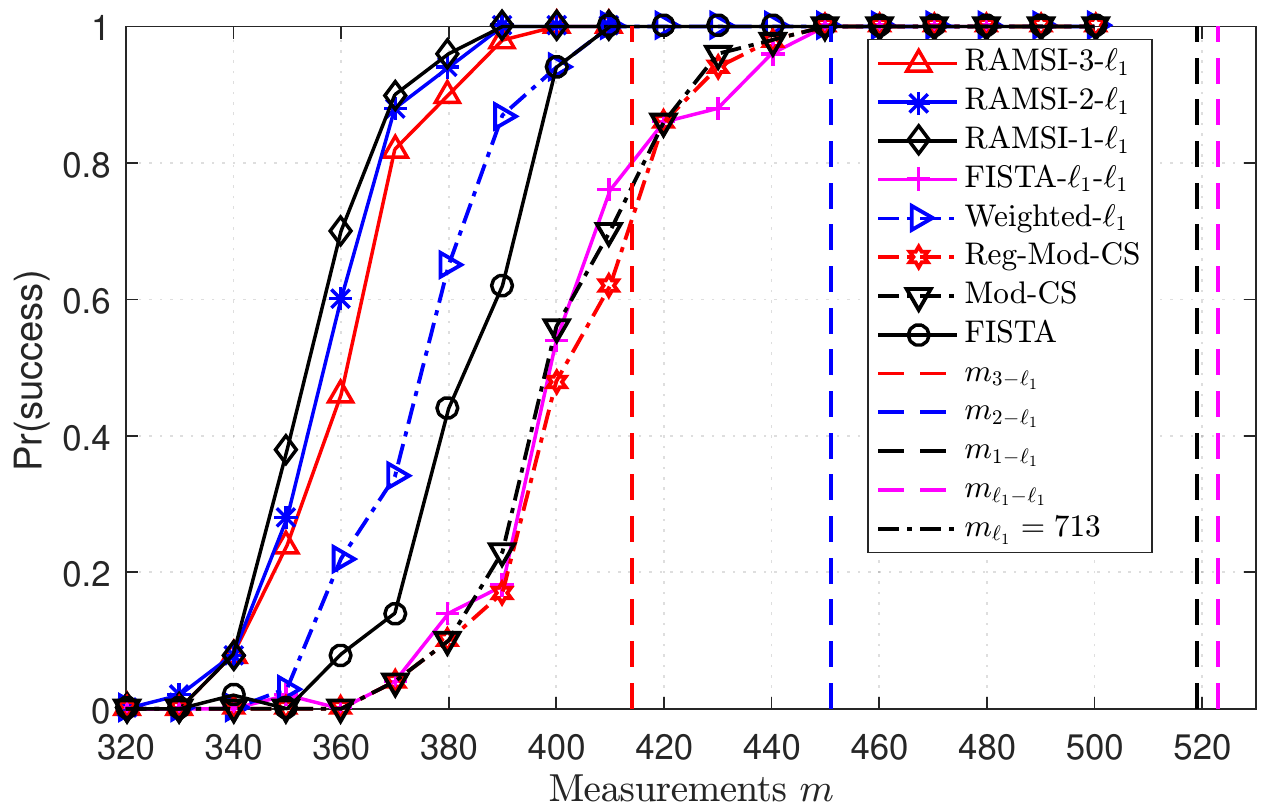}\label{figS384}} %\hspace{0pt}
%  \includegraphics[width=0.7\textwidth]{accuracyRAMSI12.pdf}
%  \vspace{-5pt}
  \caption{Successful probabilities and measurement bounds of the original 1000-D $\bx$ vs. number of measurements $m$ for RAMSI using one, two, three side information signals.}
%  \vspace{5pt}
  \label{figAccuracy}
\end{figure}
We now evaluate the obtained bounds for weighted $n\text{-}\ell_{1}$ minimization [see \eqref{upperBoundComputeCSApplyAEtaSetMeasurement}] against the bounds for classical CS [see \eqref{l1 bound}] and $\ell_{1}\text{-}\ell_{1}$ minimization [see \eqref{l1-l1 bound}]. Furthermore, we assess the performance of RAMSI against existing state-of-the-art approaches. We consider the probability of successful recovery, denoted as $\mathrm{Pr(success)}$; for a fixed number of measurements $m$, $\mathrm{Pr(success)}$ is the number of times, in which the source $\bx$ is recovered as $\bhx$ with an error $\|\bhx\hspace{-1pt}-\hspace{-1pt}\bx\|_{2}/\|\bx\|_{2}\hspace{0pt} \leq \hspace{0pt}10^{-2}$, divided by the total number of \textcolor{black}{100 Monte-Carlo iterations, alias trials, where each trial considers different generated quantities $\bx,~\bz_{1},~\bz_{2},~\bz_{3},~\mathbf{\Phi}$.}

Let RAMSI-$J$-$\ell_{1}$ denote the RAMSI algorithm that uses $J$ side information signals, that is, one ($\bz_{1}$), two ($\bz_{1}$, $\bz_{2}$), or three ($\bz_{1}$, $\bz_{2}$, $\bz_{3}$) side information signals, respectively. Moreover, let FISTA-$\ell_{1}$-$\ell_{1}$ denote the $\ell_{1}$-$\ell_{1}$ method that uses only one side information (that is, $\bz_{1}$) \cite{MotaGLOBALSIP14}. Furthermore, the existing FISTA \cite{Beck09}, FISTA-$\ell_{1}$-$\ell_{1}$ \cite{MotaGLOBALSIP14}, \textcolor{black}{Mod-CS \cite{NVaswaniTSP10}, Reg-Mod-CS \cite{NVaswaniTSP12}, and weighted-$\ell_{1}$ \cite{KhajehnejadXAH11}} algorithms are used for comparison. \textcolor{black}{The support estimate of $\bx$ that is considered as prior information by the Mod-CS, Reg-Mod-CS, and Weighted-$\ell_{1}$ algorithms is given by the support of one of the side information signals, that is, $\bz_{1}$}. Let $m_{3\text{-}\ell_{1}}$, $m_{2\text{-}\ell_{1}}$, $m_{1\text{-}\ell_{1}}$, $m_{\ell_{1}\hspace{-1pt}\text{-}\ell_{1}}$, and $m_{\ell_{1}}$ denote the corresponding bounds of RAMSI-$3$-$\ell_{1}$, RAMSI-$2$-$\ell_{1}$, RAMSI-$1$-$\ell_{1}$, FISTA-$\ell_{1}$-$\ell_{1}$, and FISTA. The source $\bx$ is compressed into different lower-dimensions $\by$ and we assess the bounds and the accuracy of a reconstructed $\bhx$ versus the number of measurements $m$. For RAMSI we set $\epsilon\hspace{0pt}=\hspace{0pt}10^{-5}$, $\lambda\hspace{0pt}=\hspace{0pt}10^{-5}$.% and initialize $\bx^{(0)}=\mathbf{\Phi}^{-1}\by$.

Fig. \ref{figS64} depicts the bounds as well as the probabilities of successful recovery versus the number of measurements for $s_{j}=64$. The results show clearly that RAMSI-3-$\ell_{1}$ gives the sharpest bound and the highest reconstruction performance. Furthermore, the performance of RAMSI-2-$\ell_{1}$ is higher than those of RAMSI-1-$\ell_{1}$ and FISTA-$\ell_{1}$-$\ell_{1}$. In this scenario, where the side information is of high quality, FISTA-$\ell_{1}$-$\ell_{1}$ outperforms our RAMSI-1-$\ell_{1}$ method. Furthermore, \textcolor{black}{Reg-Mod-CS outperforms RAMSI-1-$\ell_{1}$, FISTA-$\ell_{1}$-$\ell_{1}$, Mod-CS, and Weighted-$\ell_{1}$. In addition, the weighted-$\ell_{1}$ method performs better than FISTA but worse than the remaining schemes.} Hence, in this case, the use of equal weights---as in FISTA-$\ell_{1}$-$\ell_{1}$---leads to a higher performance than using adaptive weights with only one side information signal, as it is the case for RAMSI-1-$\ell_{1}$. This can be explained by comparing the $m_{1\text{-}\ell_{1}}$ and the $m_{\ell_{1}\hspace{-1pt}\text{-}\ell_{1}}$ bounds: It is clear that $\bar{a}_{1\text{-}\ell_{1}}$ [see \eqref{eq:hbarQuantity}] is greater than $\bar{h}$ [see \eqref{l1-l1 sparse setHBar}]. By combining this observation with the small value of $s_{1}$---due to the good-quality side information signal $\bz_{1}$---, which in turn results in a small $\bar{s}_{{\ell_{1}\text{-}\ell_{1}}}$ value, explains why the bound as well as the recovery performance of FISTA-$\ell_{1}$-$\ell_{1}$ are better than those of RAMSI-1-$\ell_{1}$ [see magenta and black lines in Fig. \ref{figS64}]. We can conclude that by exploiting multiple side information signals we can obtain the best performance and when dealing with only one good-quality side information signal we may choose to reconstruct using equal weights.

Figs. \ref{figS256} and \ref{figS384} present the bounds and the reconstruction performance versus the number of measurements when the side information signals are less correlated with the signal of interest, that is, $s_{j}\hspace{0pt}=\hspace{0pt}256$ and $s_{j}\hspace{0pt}=\hspace{0pt}352$. In this scenario, all RAMSI configurations outperform the FISTA, FISTA-$\ell_{1}$-$\ell_{1}$, \textcolor{black}{Weighted-$\ell_{1}$, Reg-Mod-CS, and Mod-CS methods}. The performance of RAMSI-1-$\ell_{1}$ is better than those of FISTA-$\ell_{1}$-$\ell_{1}$, \textcolor{black}{Weighted-$\ell_{1}$, Reg-Mod-CS, and Mod-CS}, where the same side information is used (that is, $\bz_{1}$). Interestingly, in Fig. \ref{figS384}, we observe that the accuracy of FISTA-$\ell_{1}$-$\ell_{1}$ is worse than that of FISTA, i.e., the side information $\bz_{1}$ does not help, however, RAMSI-1-$\ell_{1}$ still outperforms FISTA. These results highlight the drawback of the $\ell_{1}$-$\ell_{1}$ minimization method when the side information is of poor quality. Although encountering poor side information signals, all the RAMSI versions achieve better results than FISTA due to the proposed re-weighted $n$-$\ell_{1}$ minimization algorithm. We observe that the performance results of RAMSI-2-$\ell_{1}$ and RAMSI-3-$\ell_{1}$ are slightly worse than that of RAMSI-1-$\ell_{1}$. These small penalties can be explained by the poor quality of the side information signals, which have an impact on the iterative update process.

\subsubsection{Side Information Quality-Dependence Analysis}\label{SIPer}
We now assess the impact of the side information quality on the number of measurements required to successfully reconstruct the target signal. We consider various cases for the side information quality, expressed through different values for $s_{1}=s_{2}=s_{3}$, ranging from 20 to 400 (where $s_j$ is the support size of the difference vector $\bx-\bz_j$). The source $\bx$ has $s_{0}\hspace{0pt}=\hspace{0pt}128$ and it is generated as in Sec. \ref{numericalExperiments}.
For a fixed value of $s_{1}\hspace{0pt}=\hspace{0pt}s_{2}\hspace{0pt}=\hspace{0pt}s_{3}$, we report the number of measurements required by RAMSI-$3$-$\ell_{1}$, RAMSI-$2$-$\ell_{1}$,
RAMSI-$1$-$\ell_{1}$, FISTA-$\ell_{1}$-$\ell_{1}$, FISTA to achieve a probability of successful recovery bounded as $\mathrm{Pr(success)}\hspace{0pt}\geq \hspace{0pt}0.98$.

\begin{figure}[tp!]
\centering
\includegraphics[width=0.47\textwidth]{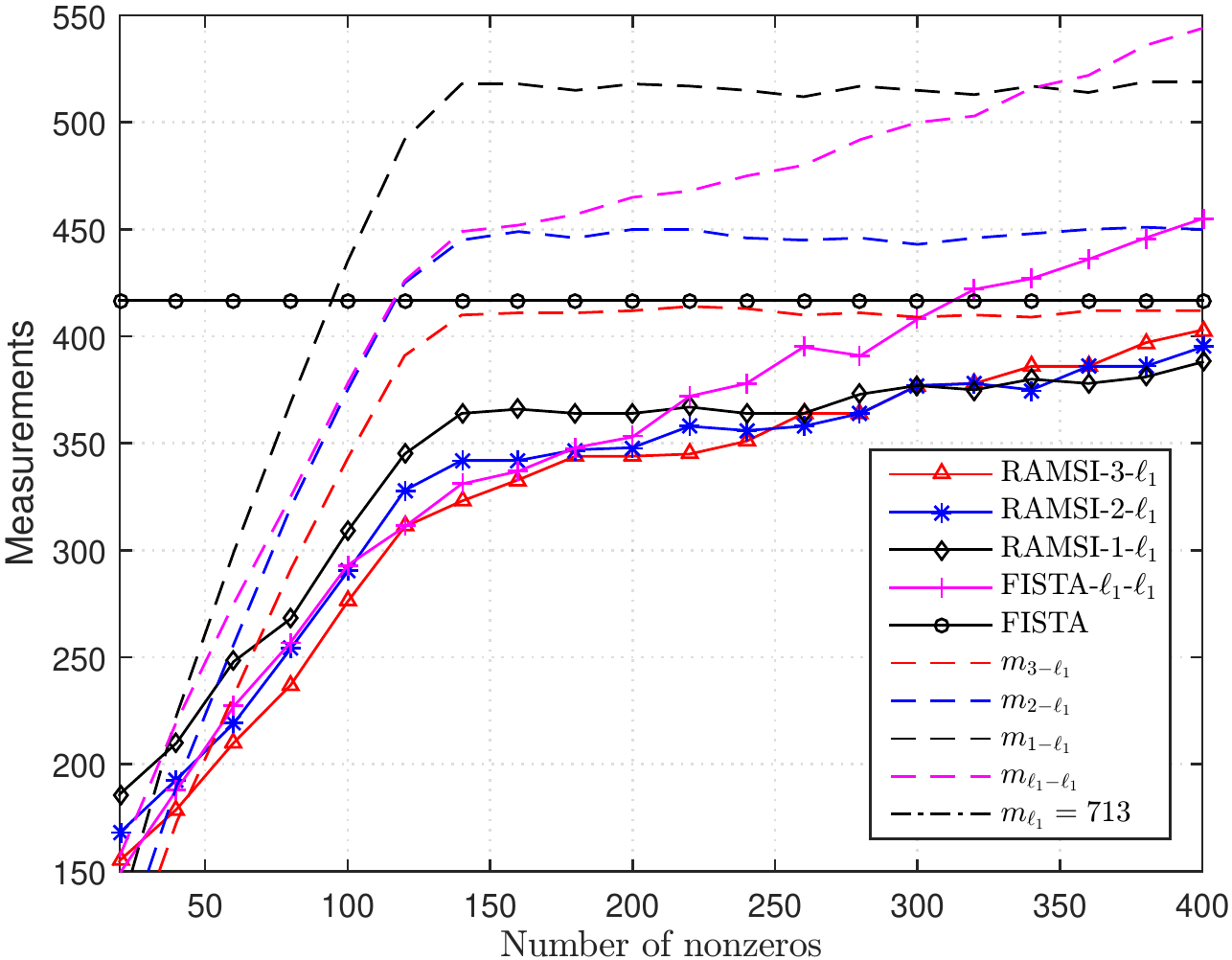}
\caption{Number of measurements vs. number of nonzeros $s_{1}\hspace{0pt}=\hspace{0pt}s_{2}\hspace{0pt}=\hspace{0pt}s_{3}$ of side information signals given the source $s_{0}\hspace{0pt}=\hspace{0pt}128$ for RAMSI using one, two, three side information signals.}
\label{figSIQuality}
\end{figure}

Fig. \ref{figSIQuality} shows that the recovery performance is clearly improved when leveraging side information signals; and the more signals are considered the better the performance. However, when the support size of the difference vectors $\bx-\bz_j$ meets that of the target signal $s_0$, the use of side information does not improve the performance anymore. When $s_{j}\hspace{0pt}>\hspace{0pt}s_{0}$, the performance and the measurement bound of FISTA-$\ell_{1}$-$\ell_{1}$ are increasing with the value of $s_{1}$. Specifically, when $s_{1}\hspace{0pt}>\hspace{0pt}315$, the performance of FISTA-$\ell_{1}$-$\ell_{1}$ is worse than that of FISTA, thereby illustrating again the limitations of the $\ell_{1}$-$\ell_{1}$ minimization method when the side information is of poor quality.
As shown in Fig. \ref{figSIQuality}, the performance of RAMSI---both in terms of the theoretical bounds and the practical results---is robust against poor-quality side information signals. The theoretical bounds are sharper when the number of side information signals increases and remain approximately constant when $s_{j}$ increases after a threshold (indicating increasing-poor side information quality). When $s_{j}\hspace{0pt}>\hspace{0pt}300$, the number of measurements of RAMSI-3-$\ell_{1}$ and RAMSI-2-$\ell_{1}$ are slightly worse than those of RAMSI-1-$\ell_{1}$, approaching the number of measurements of FISTA [this behavior was also observed in Fig. \ref{figS384}]. To address this issue, we can adaptively select the best performance among the RAMSI configurations. For instance, when one side information is available, we can choose to use equal weights in the case of good side information. Furthermore, we can ensure that RAMSI's performance is not worse than FISTA by weighing dominantly on the source rather than on the poor-quality side information signals. To do so, we need to make an estimate of the quality of the side information signals, which is left as a topic for future research.

\vspace{-0pt}
\subsubsection{Feature Histogram Reconstruction}\label{featPer}
\begin{figure}[tp!]
\centering
\setlength{\tabcolsep}{1pt}
\renewcommand{\arraystretch}{0.1}
\subfigure[\vspace{-5pt}\texttt{Object}
\texttt{\#16}]{\includegraphics[width=0.12\textwidth]{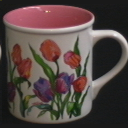}\label{figObj16}} \hspace{25pt} \subfigure[\vspace{-5pt}\texttt{Object} \texttt{\#60}]{\includegraphics[width=0.12\textwidth]{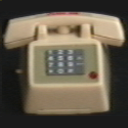}\label{figObj60}}
\subfigure[\vspace{-9pt}Reconstruction performance of \texttt{Object}
\texttt{\#16}]{\includegraphics[width=0.48\textwidth]{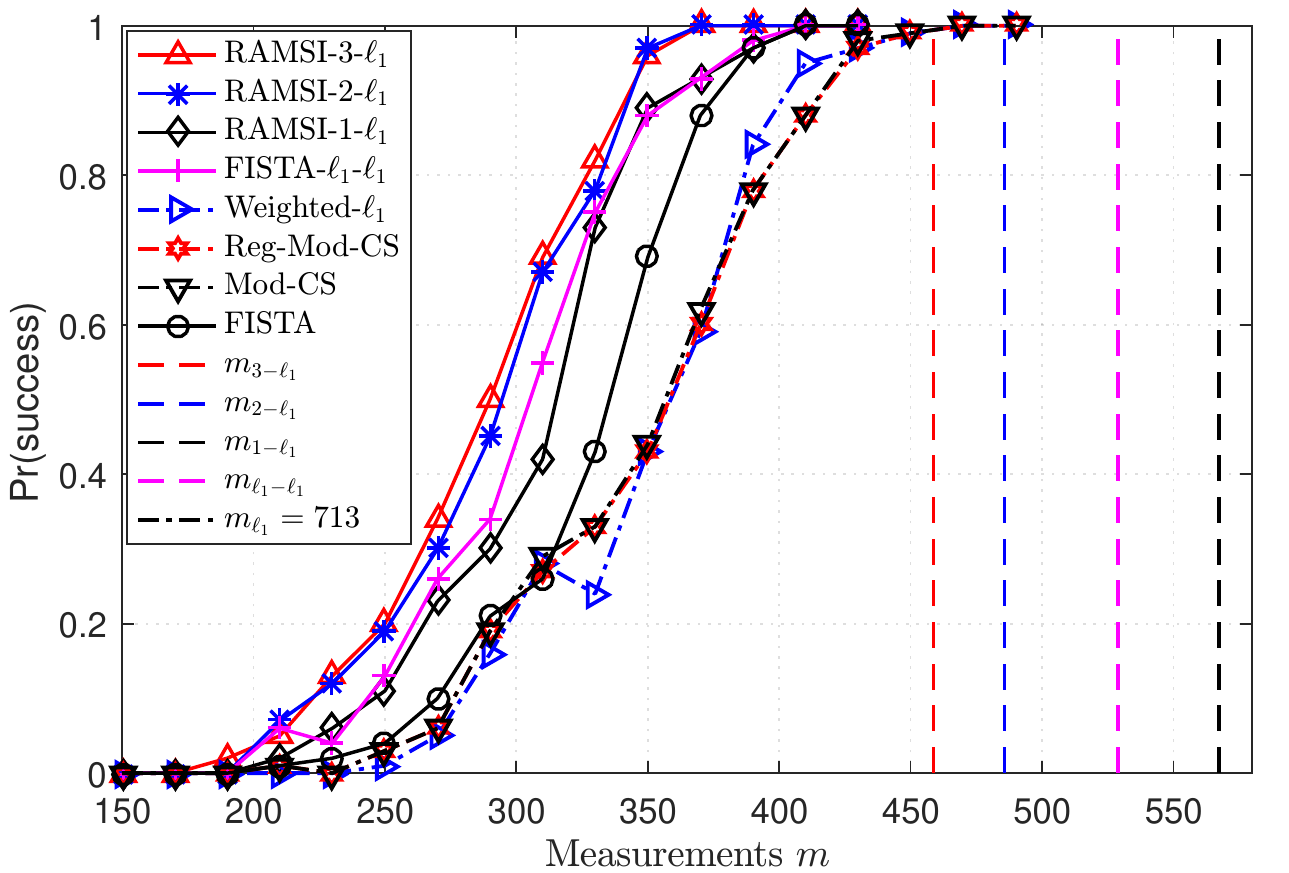}\label{figPerObj16}}
\subfigure[Reconstruction performance of \texttt{Object} \texttt{\#60}]{\includegraphics[width=0.48\textwidth]{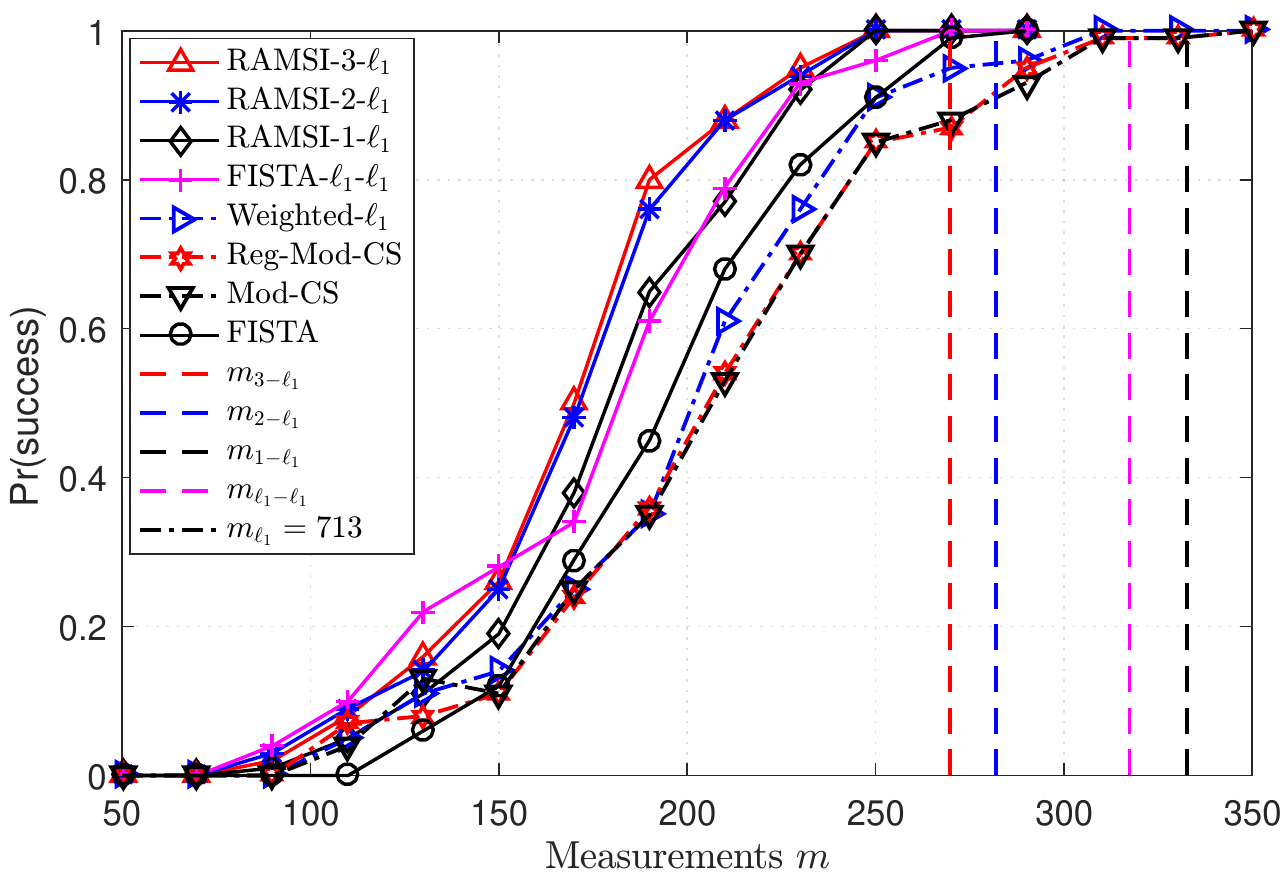}\label{figPerObj60}}
\caption{Successful probabilities and measurement bounds vs. number of measurements $m$ for \texttt{Object}
\texttt{\#16} and \texttt{Object} \texttt{\#60} in \texttt{COIL-100} \cite{Nene96}.}\label{figFeatHis}
\end{figure}

We now evaluate the RAMSI configurations and the derived bounds on sparse sources extracted from the multiview image database \cite{Nene96}. Fig. \ref{figFeatHis} presents the performance of RAMSI, with one, two and three side information signals, against that of FISTA \cite{Beck09}, FISTA-$\ell_{1}$-$\ell_{1}$ \cite{MotaGLOBALSIP14}, \textcolor{black}{Weighted-$\ell_{1}$ \cite{KhajehnejadXAH11}, Reg-Mod-CS \cite{NVaswaniTSP12}, and Mod-CS \cite{NVaswaniTSP10}}. \textcolor{black}{The support estimate that is considered as prior information by the Mod-CS, Reg-Mod-CS, and Weighted-$\ell_{1}$ algorithms is given by the nearest view, which corresponds to the highest quality side information signal}. The reconstruction performance is expressed in terms of the probability of success---\textcolor{black}{the number of times divided by the total number of 100 Monte Carlo iterations}---versus the number of measurements. As in Section \ref{generatedPer}, we set $\epsilon=10^{-5}$ and $\lambda=10^{-5}$. Fig. \ref{figPerObj16} and Fig. \ref{figPerObj60} show clearly that RAMSI significantly improves the reconstruction accuracy for the signals extracted from \texttt{Object}
\texttt{\#16} [Fig. \ref{figObj16}] and \texttt{Object} \texttt{\#60} [Fig.
\ref{figObj60}]. \textcolor{black}{Furthermore, the performance of RAMSI systematically increases with the number of side information signals used to aid the reconstruction. It is worth noting that the improvement brought by RAMSI-3-$\ell_{1}$ over RAMSI-2-$\ell_{1}$ is not remarkable as the third side information signal is extracted from the furthest neighboring view. It is also important to mention that, in this practical scenario, the Weighted-$\ell_{1}$ \cite{KhajehnejadXAH11}, Reg-Mod-CS \cite{NVaswaniTSP12}, and Mod-CS \cite{NVaswaniTSP10} methods have worse performance than FISTA because of the poor support estimate}.% with generated sparse signals in Sections \ref{generatedPer} and \ref{SIPer}. %Overall, the results confirm the robustness of RAMSI and its measurement bounds for all scenarios of distributed sparse sources.

\section{Conclusion}
\label{conclusion}
This paper established theoretical measurement bounds for the problem of signal recovery under a weighted $n\text{-}\ell_1$ minimization framework and proposed the RAMSI algorithm. The proposed algorithm incorporates multiple side information signals in the problem of sparse signal recovery and iteratively weights the various signals so as to optimize the reconstruction performance. The bounds confirm the advantage of RAMSI in utilizing multiple side information signals to significantly reduce the number of measurements and to deal with variations in the quality of the side information. We experimentally assessed the established bounds and the performance of RAMSI against state-of-the-art methods using both synthetic and real-life sparse signals. The results showed that the measurement bounds are sharper than existing bounds and that RAMSI outperforms the conventional CS method, the recent $\ell_{1}$-$\ell_{1}$ minimization method as well as the Modified-CS, regularized Modified-CS, and weighted $\ell_1$ minimization methods. Moreover, RAMSI can efficiently incorporate multiple side information signals, with its performance typically increasing with the number of the involved side information signals.

\appendices
\section{Proof of Proposition \ref{propMSI}}\label{appendixProximal}
Replacing~$g(\bx)=\lambda \sum_{j=0}^{J}\|\mathbf{W}_{j}(\bx-\bz_{j})\|_{1}$ in the definition of the proximal operator in~\eqref{l1-proximalOperator}, we obtain
\begin{equation}\label{n-l1-proximalOperatorCompute}
\Gamma_{\frac{1}{L}g}(\bx)=\arg\hspace{-1pt}\min_{\bv \in\mathbb{R}^{n}}\Big\{\frac{\lambda}{L} \sum\limits_{j=0}^{J}\|\mathbf{W}_{j}(\bv-\bz_{j})\|_{1}+ \frac{1}{2}\|\bv-\bx\|^{2}_{2}\Big\}.
\end{equation}
Both terms in \eqref{n-l1-proximalOperatorCompute} are separable in $\bv$ and thus, we can minimize each element $v_{i}$ of $\bv$ individually as
\begin{equation}\label{n-l1-proximalOperatorElement}
\Gamma_{\hspace{-2pt}\frac{1}{L}g}(x_{i})\hspace{-1pt} =\hspace{-1pt} \argmin_{v_{i} \in\mathbb{R}}\hspace{-1pt}\Big\{\hspace{-1pt}h(v_{i})\hspace{-1pt}=\hspace{-1pt}\frac{\lambda}{L}\hspace{-1pt} \sum\limits_{j=0}^{J}w_{ji}|v_{i}\hspace{-1pt}-\hspace{-1pt}z_{ji}|\hspace{-1pt} + \hspace{-2pt} \frac{1}{2}(v_{i}\hspace{-1pt}-\hspace{-1pt}x_{i})^{2}\hspace{-1pt}\Big\}.
\end{equation}

To solve~\eqref{n-l1-proximalOperatorElement} we need to derive $\frac{\partial h(v_{i})}{\partial v_{i}}$. Recall from Assumption \ref{asPartical} that, without loss of generality, we assume that $-\infty=z_{(-1)i} \leq z_{0i}\leq z_{1i} \leq\dots \leq z_{Ji} \leq z_{(J+1)i}=\infty$. When $v_{i}\in(z_{li},z_{(l+1)i})$ with $l\in\{-1,\dots,J\}$, $h(v_{i})$ is differentiable and $\frac{\partial h(v_{i})}{\partial v_{i}}$ is calculated as
\begin{equation}\label{n-l1-proximalOperatorElementDerivative}
\frac{\partial h(v_{i})}{\partial v_{i}}= \frac{\lambda}{L} \sum\limits_{j=0}^{J}w_{ji}\mathrm{sign}(v_{i}-z_{ji})+ (v_{i}-x_{i}),
\end{equation}
where $\mathrm{sign}(.)$ is the sign function. Using the definition of the $\mathfrak{b}(.)$ function in \eqref{eq:bfunction}, we have that $\mathrm{sign}(v_{i}-z_{ji})=(-1)^{\mathfrak{b}(l<j)}$ and thus, we can rewrite \eqref{n-l1-proximalOperatorElementDerivative} as
\begin{equation}\label{n-l1-proximalOperatorElementBoolean}
\frac{\partial h(v_{i})}{\partial v_{i}}=\frac{\lambda}{L}\sum\limits_{j=0}^{J}w_{ji}(-1)^{\mathfrak{b}(l<j)}+ (v_{i}-x_{i}).
\end{equation}
By setting $\frac{\partial h(v_{i})}{\partial v_{i}}=0$, we obtain:
\begin{equation}\label{partialZero}
v_{i} = x_{i}-\frac{\lambda}{L} \sum_{j=0}^{J}w_{ji}(-1)^{\mathfrak{b}(l<j)}.
\end{equation}
Since $v_{i}\in(z_{li},z_{(l+1)i})$, via \eqref{partialZero} we have that \begin{equation}\label{n-l1-proximalX}
z_{li}+\frac{\lambda}{L}\sum\limits_{j=0}^{J}w_{ji}(-1)^{\mathfrak{b}(l<j)}\hspace{-1pt}<\hspace{-1pt}x_{i}\hspace{-1pt}<\hspace{-1pt}z_{(l+1)i}\hspace{-1pt}+\hspace{-1pt}\frac{\lambda}{L}\hspace{-1pt} \sum\limits_{j=0}^{J}w_{ji}(\hspace{-1pt}-1\hspace{-1pt})^{\mathfrak{b}(l<j)}.
\end{equation}

In the following Lemma \ref{lemmaMin}, we prove that, in the remaining range value of $x_i$, namely, in the case that
\begin{equation}\label{n-l1-proximalZ}
z_{li}\hspace{-1pt}+\hspace{-1pt}\frac{\lambda}{L} \hspace{-1pt}\sum\limits_{j=0}^{J}w_{ji}(\hspace{-1pt}-1\hspace{-1pt})^{\mathfrak{b}(l-1<j)}\hspace{-1pt}\leq \hspace{-1pt}x_{i}\hspace{-1pt}\leq \hspace{-1pt} z_{li}\hspace{-1pt}+\hspace{-1pt}\frac{\lambda}{L} \hspace{-1pt}\sum\limits_{j=0}^{J}w_{ji}(\hspace{-1pt}-1\hspace{-1pt})^{\mathfrak{b}(l<j)},
\end{equation}
the minimum of $h(v_{i})$ is obtained when $v_{i}=z_{li}$.
\begin{lemma}
\label{lemmaMin}
Given that $x_{i}$ is bounded as in \eqref{n-l1-proximalZ}, the function $h(v_{i})$ defined in \eqref{n-l1-proximalOperatorElement} is minimized when $v_{i}=z_{li}$.
\end{lemma}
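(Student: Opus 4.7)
The plan is to exploit the convexity of $h(v_i)$ defined in \eqref{n-l1-proximalOperatorElement} together with the first-order optimality condition at a non-differentiable point. The function $h$ is strictly convex on $\mathbb{R}$ because it is the sum of the strictly convex quadratic $\frac{1}{2}(v_i-x_i)^2$ and the convex piecewise-linear term $\frac{\lambda}{L}\sum_{j=0}^{J} w_{ji}|v_i-z_{ji}|$, so it admits a unique global minimizer. The case in which that minimizer lies strictly inside some interval $(z_{li},z_{(l+1)i})$ has already been handled by \eqref{partialZero}--\eqref{n-l1-proximalX}. Since the hypothesis \eqref{n-l1-proximalZ} is precisely complementary to \eqref{n-l1-proximalX}, the minimizer must occur at a breakpoint, and I would argue that the breakpoint in question is exactly $z_{li}$.

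The key step will be to evaluate the one-sided derivatives of $h$ at $z_{li}$ and then invoke the standard characterization that a convex function attains its minimum at $z_{li}$ if and only if $h'_-(z_{li})\leq 0\leq h'_+(z_{li})$. Using the expression \eqref{n-l1-proximalOperatorElementBoolean} on the two neighbouring open intervals $(z_{(l-1)i},z_{li})$ and $(z_{li},z_{(l+1)i})$, and letting $v_i$ tend to $z_{li}$ from either side, I obtain
\begin{align*}
h'_-(z_{li}) &= \frac{\lambda}{L}\sum_{j=0}^{J} w_{ji}(-1)^{\mathfrak{b}(l-1<j)} + (z_{li}-x_i),\\
h'_+(z_{li}) &= \frac{\lambda}{L}\sum_{j=0}^{J} w_{ji}(-1)^{\mathfrak{b}(l<j)} + (z_{li}-x_i),
\end{align*}
since $\mathrm{sign}(v_i-z_{ji})=(-1)^{\mathfrak{b}(l-1<j)}$ immediately below $z_{li}$ and $(-1)^{\mathfrak{b}(l<j)}$ immediately above. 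Imposing the two inequalities $h'_-(z_{li})\leq 0\leq h'_+(z_{li})$ and rearranging reproduces exactly the pair of bounds in \eqref{n-l1-proximalZ}, which shows that $v_i=z_{li}$ is the unique minimizer under this hypothesis.

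The main obstacle that I expect is bookkeeping the signs of $\mathrm{sign}(z_{li}-z_{ji})$ at the breakpoint itself, since the term with $j=l$ is not differentiable and its subdifferential is the set-valued interval $[-w_{li},w_{li}]$. Working with one-sided derivatives rather than the full subdifferential bypasses this issue: the conventions $\mathrm{sign}(0^-)=-1$ and $\mathrm{sign}(0^+)=+1$ are encoded automatically by switching the first argument of the boolean function $\mathfrak{b}$ between $l-1$ and $l$, which produces the required asymmetric treatment of the $j=l$ term in $h'_-$ versus $h'_+$. With this sign convention fixed, everything reduces to a direct substitution, and the lemma follows from the convex first-order optimality condition.
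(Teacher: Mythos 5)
Your proof is correct, but it follows a genuinely different route from the paper's. The paper proves Lemma \ref{lemmaMin} by recentring $h$ at $z_{li}$ and applying the elementary inequalities $|a-b|\geq|a|-|b|$ and $-(v_{i}-z_{li})(x_{i}-z_{li})\geq-|v_{i}-z_{li}|\cdot|x_{i}-z_{li}|$ to obtain a lower bound whose $v_{i}$-dependent part, $|v_{i}-z_{li}|\bigl(\frac{\lambda}{L}\sum_{j=0}^{J}w_{ji}-|x_{i}-z_{li}|\bigr)+\frac{1}{2}(v_{i}-z_{li})^{2}$ as in \eqref{leftInequality}, is nonnegative and vanishes at $v_{i}=z_{li}$; crucially, the paper only invokes the \emph{symmetric} relaxation $|x_{i}-z_{li}|\leq\frac{\lambda}{L}\sum_{j}w_{ji}$ of \eqref{n-l1-proximalZ}, derived in \eqref{n-l1-proximalZrewriteMore}. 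Your argument instead uses strict convexity of $h$ and the kink optimality condition $h'_{-}(z_{li})\leq 0\leq h'_{+}(z_{li})$, with the one-sided derivatives read off from \eqref{n-l1-proximalOperatorElementBoolean} on the two adjacent intervals; rearranging these two inequalities reproduces the \emph{asymmetric} interval \eqref{n-l1-proximalZ} endpoint-by-endpoint. This buys you several things the paper's proof does not deliver: an exact if-and-only-if characterization (so the case split between \eqref{n-l1-proximalX} and \eqref{n-l1-proximalZ} is provably exhaustive and consistent), uniqueness of the minimizer via strict convexity, and immunity to a genuine looseness in the paper's argument---the paper's lower bound is not attained at $v_{i}=z_{li}$ (the term $\frac{\lambda}{L}\sum_{j}w_{ji}|z_{ji}-z_{li}|$ enters with opposite signs in $h(z_{li})$ and in the bound), so ``the lower bound is minimized at $z_{li}$'' does not by itself imply $h(v_{i})\geq h(z_{li})$, and the symmetric relaxation alone could not distinguish $z_{li}$ from an interior minimizer in an adjacent interval when $J\geq 1$. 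What the paper's route buys in exchange is that it is entirely elementary, using no subdifferential or one-sided-derivative machinery. One small caveat you share with the paper: when ties $z_{li}=z_{(l+1)i}$ occur, some intervals $(z_{li},z_{(l+1)i})$ are empty and the one-sided derivatives must be taken with $l$ chosen as the extremal index among tied values; this is absorbed by the paper's without-loss-of-generality ordering and does not affect the substance of your argument.
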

%\vspace{-10pt}
\begin{proof}
Let us express $h(v_{i})$ as
\vspace{-8pt}
\begin{equation}
\label{n-l1-proximalOperatorElementLemma}
h(\hspace{-1pt}v_{i}\hspace{-1pt})\hspace{-1pt} = \hspace{-1pt}\frac{\lambda}{L}\hspace{-1pt} \sum_{j=0}^{J}w_{ji}|(\hspace{-1pt}v_{i}\hspace{-1pt}-\hspace{-1pt}z_{li}\hspace{-1pt})\hspace{-1pt}-\hspace{-1pt}(\hspace{-1pt}z_{ji}\hspace{-1pt}-\hspace{-1pt}z_{li}\hspace{-1pt})| \hspace{-1pt}+ \hspace{-1pt}\frac{1}{2}[(\hspace{-1pt}v_{i}\hspace{-1pt}-\hspace{-1pt}z_{li}\hspace{-1pt})\hspace{-1pt}-\hspace{-1pt}(\hspace{-1pt}x_{i}\hspace{-1pt}-\hspace{-1pt}z_{li}\hspace{-1pt})]^{2}.
\end{equation}
Applying the inequality $|a-b|\geq |a|-|b|$, with $a,b\in \mathbb{R}$, on the first summation term and expanding the second term, we obtain:
\begin{align}
\label{n-l1-proximalOperatorElementLemmaInequality}
h(v_{i})&\geq\frac{\lambda}{L}\sum\limits_{j=0}^{J}w_{ji}|v_{i}-z_{li}|-\frac{\lambda}{L} \sum\limits_{j=0}^{J}w_{ji}|z_{ji}-z_{li}| \nonumber\\
&\quad+\frac{1}{2}(v_{i}-z_{li})^{2}-(v_{i}-z_{li})\cdot(x_{i}-z_{li})+\frac{1}{2}(x_{i}-z_{li})^{2}.
\end{align}
Using the basic inequality $-(v_{i}-z_{li})\cdot(x_{i}-z_{li})\geq -|v_{i}-z_{li}|\cdot|x_{i}-z_{li}|$, we can express \eqref{n-l1-proximalOperatorElementLemmaInequality} as
\begin{equation}
\label{n-l1-proximalOperatorElementLemmaInequalityFurther}
\begin{split}
h(v_{i})\hspace{-1pt} \geq &|v_{i}\hspace{-1pt}-\hspace{-1pt}z_{li}|\frac{\lambda}{L}\hspace{-1pt}\sum\limits_{j=0}^{J}\hspace{-1pt}w_{ji} \hspace{-1pt}-\hspace{-1pt}|v_{i}\hspace{-1pt}-\hspace{-1pt}z_{li}|\cdot|x_{i}\hspace{-1pt}-\hspace{-1pt}z_{li}|\hspace{-1pt}+ \hspace{-1pt} \frac{1}{2}(v_{i}\hspace{-1pt}-\hspace{-1pt}z_{li})^{2}\\
&\qquad\qquad\qquad-\frac{\lambda}{L} \sum\limits_{j=0}^{J}w_{ji}|z_{ji}\hspace{-1pt}-\hspace{-1pt}z_{li}| \hspace{-1pt}+\hspace{-1pt}\frac{1}{2}(x_{i}\hspace{-1pt}-\hspace{-1pt}z_{li})^{2}.
\end{split}
\end{equation}
At this point, via~\eqref{n-l1-proximalZ} we obtain
\begin{align}\label{n-l1-proximalZrewriteMore}
&-\frac{\lambda}{L}\sum\limits_{j=0}^{J}w_{ji}\leq x_{i}-z_{li}\leq \frac{\lambda}{L}\sum\limits_{j=0}^{J}w_{ji}\nonumber\\
\Rightarrow&|x_{i}-z_{li}|\leq \frac{\lambda}{L}\sum\limits_{j=0}^{J}w_{ji}.
\end{align}
We now observe that the part including $v_{i}$ in the right hand side of \eqref{n-l1-proximalOperatorElementLemmaInequalityFurther} can be written as
\begin{equation}\label{leftInequality}
|v_{i}-z_{li}|\Big(\frac{\lambda}{L}\sum\limits_{j=0}^{J}w_{ji} - |x_{i}-z_{li}|\Big) +  \frac{1}{2}(v_{i} - z_{li})^{2}.
\end{equation}
 Taking into account \eqref{n-l1-proximalZrewriteMore}, the expression in \eqref{leftInequality} and, in turn $h(v_{i})$, are minimized when $v_{i} = z_{li}$.
\end{proof}

In summary, via \eqref{partialZero} and Lemma \ref{lemmaMin}, we obtain
\begin{equation}\label{n-l1-proximalOperatorElementCompute}
\Gamma_{\frac{1}{L}g}(x_{i}) = \left\{
\begin{array}{l}
x_{i} - \frac{\lambda}{L}\sum\limits_{j=0}^{J}w_{ji}(-1)^{\mathfrak{b}(l<j)},\mathrm{~~~if}\:\:\eqref{n-l1-proximalX}\\[-3pt]
z_{li},~~~~~~~~~~~~~~~~~~~~~~~~~~~~~~\mathrm{if}\:\:\eqref{n-l1-proximalZ}.
\end{array}
\right.
\end{equation}

\textcolor{black}{
\section{Proof of Theorem~\ref{RAMSIBound} and Corollary~\ref{corBoundRelation}}}
\label{proof:ProofTheorem4}

\textcolor{black}{We begin by stating some important results that help us proving Theorem~\ref{RAMSIBound} and Corollary~\ref{corBoundRelation}.}

Recall that the probability density of the normal distribution $\mathcal{N}(0,1)$
with zero-mean and unit variance $\psi(x)$ is given by
\begin{equation}\label{densityDistribution}
 \psi(x):=\frac{1}{\sqrt{2\pi}}e^{-x^{2}/2}.
\end{equation}

Our formulations consider the following inequality, which is stated in~\cite{MotaARXIV14},
that is
 \begin{equation}
\label{logInequality}
    \frac{(1-x^{-1})}{\sqrt{\pi\log(x)}}\leq\frac{1}{\sqrt{2\pi}}\leq\frac{2}{5},
\end{equation}
for all $x>1$. Moreover, adhering to the formulations in~\cite{MotaARXIV14},
we use the following expressions in our derivations:
\begin{subequations}
\label{spetralFormulation}
  \begin{align}
  &\mathcal{A}(x):=\frac{1}{\sqrt{2\pi}}\int_{x}^{\infty}(v-x)^{2}e^{-v^{2}/2}\mathrm{d}v\label{positiveSpectral}\\
  &\mathcal{B}(x):=\frac{1}{\sqrt{2\pi}}\int_{-\infty}^{x}(v-x)^{2}e^{-v^{2}/2}\mathrm{d}v,
\label{negativeSpectral}
\end{align}
\end{subequations}
for which we have that
\begin{equation}\label{zeroABQ}
  \mathcal{A}(0)=\mathcal{B}(0)=1/2.%Q(0)=
\end{equation}
When $x\neq 0$, the following inequalities have been derived in~\cite{MotaARXIV14}:
\begin{subequations}
\label{spetralInequality}
  \begin{align}
  &    \mathcal{A}(x)\leq\left\{
  \begin{array}{l}
    \psi(x)/x,~~~~~~x>0 \\
    x^{2}+1  ,~~~~~~~x<0
  \end{array}
\right.\label{positiveA}\\
  &\mathcal{B}(x)\leq\left\{
  \begin{array}{l}
    -\psi(x)/x,~~~~x<0 \\
    x^{2}+1  ,~~~~~~~x>0.
  \end{array}
\right. \label{negativeB}
  \end{align}
\end{subequations}

\begin{lemma}
\label{lemmaPsi}
Given $x \in (0,1]$ and $\tau >0$ for $\psi(x)$ given in~\eqref{densityDistribution},
we have
\begin{equation}
\label{psiInequality}
    \frac{\psi(\tau x)}{\tau x}\leq \frac{1}{\sqrt{2\pi}}\frac{1-x^{2}}{
\tau x}+x\frac{\psi(\tau)}{\tau }.
\end{equation}
\end{lemma}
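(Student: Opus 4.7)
The target inequality reduces cleanly to a convexity statement about the exponential, so the proof should be essentially a one-liner after an algebraic simplification.

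First, I substitute the definition $\psi(y)=\tfrac{1}{\sqrt{2\pi}}e^{-y^{2}/2}$ on both sides of \eqref{psiInequality}. Since $x\in(0,1]$ and $\tau>0$, the quantity $\tau x$ is strictly positive, so I can multiply through by $\sqrt{2\pi}\,\tau x>0$ without flipping the inequality. This reduces the claim to the purely scalar statement
\begin{equation*}
  e^{-\tau^{2}x^{2}/2}\;\leq\;(1-x^{2})+x^{2}\,e^{-\tau^{2}/2}.
\end{equation*}

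Next, I reparametrise with $u:=\tau^{2}/2\geq 0$ and $t:=x^{2}\in(0,1]$, which turns the inequality into
\begin{equation*}
  e^{-ut}\;\leq\;(1-t)\,e^{-u\cdot 0}+t\,e^{-u\cdot 1}.
\end{equation*}
This is precisely the statement that the function $\varphi(s):=e^{-us}$ lies below its chord on the interval $[0,1]$ at the convex combination $s=t\cdot 1+(1-t)\cdot 0=t$. Since $\varphi''(s)=u^{2}e^{-us}\geq 0$, the function $\varphi$ is convex, and the bound follows immediately from the definition of convexity (equivalently, from Jensen's inequality applied to the two-point distribution placing mass $t$ on $1$ and mass $1-t$ on $0$). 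Unwinding the substitutions yields \eqref{psiInequality}.

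The argument presents no real obstacle. A brief sanity check confirms the edge behaviour: at $x=1$ both sides of \eqref{psiInequality} equal $\psi(\tau)/\tau$, matching the equality case of convexity at $t=1$; and as $x\to 0^{+}$ the two sides diverge at the same rate $(\sqrt{2\pi}\,\tau x)^{-1}$, with the gap closing like $O(\tau x)$. The only point to record carefully in the formal write-up is the strict positivity of $\tau x$ when dividing, so that the direction of the inequality is preserved throughout the reduction.
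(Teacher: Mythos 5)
Your proof is correct and follows essentially the same route as the paper's: after clearing the positive factor $\sqrt{2\pi}\,\tau x$, both arguments reduce the claim to the scalar bound $e^{-\tau^{2}x^{2}/2}\le 1+x^{2}\bigl(e^{-\tau^{2}/2}-1\bigr)$. The paper justifies this step by citing Bernoulli's inequality applied to $\bigl(1+(e^{-\tau^{2}/2}-1)\bigr)^{x^{2}}$, which for an exponent $x^{2}\in(0,1]$ is precisely the chord-convexity inequality for $s\mapsto e^{-us}$ that you invoke, so the two proofs coincide in substance.
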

\begin{proof}
Using~\eqref{densityDistribution} the left hand side of \eqref{psiInequality}
becomes
\begin{equation}
\label{psiInequalityDeploy}
    \frac{\psi(\tau x)}{\tau x}=\frac{1}{\sqrt{2\pi}}\frac{e^{-\tau^{2}x^{2}/2}}{\tau
x}.
\end{equation}
Applying Bernoulli's inequality on $e^{-\tau^{2}x^{2}/2}$ leads to
\begin{equation}\label{bernoulli}
    e^{-\tau^{2}x^{2}/2}=\Big(1+(e^{-\tau^{2}/2}-1)\Big)^{x^{2}}\leq 1+x^{2}(e^{-\tau^{2}/2}-1),
\end{equation}
where $0<x\leq 1$ and $(e^{-\tau^{2}/2}-1)> -1$ given that $\tau >0$. Combining~\eqref{psiInequalityDeploy} and~\eqref{bernoulli} leads to the proof.
\end{proof}

\begin{proof}[Proof of Theorem~\ref{RAMSIBound}]
\textcolor{black}{We derive the bound based on Proposition \ref{propUpper} by firstly computing the subdifferential $\partial g(\bx)$ and then the distance between the standard normal vector $\bg$ to $\partial g(\bx)$. Under Assumptions
\ref{asIndex}, \ref{asIndexZero}, \ref{asPartical}} and the weights in \eqref{n-l1-weights},
the elements of the vectors $\bu$ in the subdifferential $\partial g(\bx)$ are expressed as
\begin{equation}\label{csSubdifferential}
  \begin{array}{l}
    u_{i}= a_{i},~~~~~~~~~~~~~~~~~~~~~~~~~~~~~~~~~~~i\in\{1,\dots,p\}, \\
    u_{i}\in [b_{i}- c_{i},b_{i}+c_{i}],~~~~~~~~~~~~~~~~~~~i\in\{p+1,\dots,q\},\\
    u_{i} \in \big[-\sum\limits_{j=0}^{J} w_{ji}, \sum\limits_{j=0}^{J}
w_{ji}\big] = [-1,1],~~i \in\{q + 1,\dots,n\}.\end{array}
\end{equation}
where
\begin{subequations}
\label{abEquality}
\begin{align}
&a_{i}=\sum\limits_{j=0}^{J} w_{ji}( -1)^{\mathfrak{b}(l_{i}<j)} \label{a}\\
&b_{i}= \sum_{j \notin\{l_{i},\dots, l_{i} + d_{i}-1\}} w_{ji}(  -1 )^{\mathfrak{b}(l_{i}<j)}\label{b}\\
&c_{i}= \sum_{j=l_{i}}^{l_{i}+d_{i}-1}  w_{ji}=d_{i}\frac{\eta_{i}}{\epsilon}.\label{c}
\end{align}
\end{subequations}

Using \eqref{euclideanDistance} and \eqref{csSubdifferential}, we can compute the distance from the standard normal vector $\bg$ to the subdifferential $\partial g(\bx)$ as
\begin{align}\label{distanceToSubdifferentialCS}
\mathrm{dist}^{2} &(\bg,\tau  \cdot \partial g(\bx))=\nonumber\\
&\sum\limits_{i=1}^{p}(\g_{i}-\tau a_{i})^{2} \nonumber\\
&+\sum\limits_{i=p+1}^{q}\hspace{-7pt}\Big( \mathcal{P}^{2} (\g_{i} -
      \tau(b_{i} + c_{i}))
      + \mathcal{P}^{2} ( -\g_{i} + \tau(b_{i} - c_{i})) \Big)
     \nonumber\\
     &+ \sum\limits_{i=q+1}^{n} \mathcal{P}^{2}(|\g_{i}|-\tau),
\end{align}
where $\mathcal{P}(a) := \max\{a,0\}$ returns the maximum value between $a \in \mathbb{R}$ and $0$.
Taking the expectation of \eqref{distanceToSubdifferentialCS} with respect to $\bg$ delivers
\begin{align}\label{distanceToSubdifferentialExpectationCS}
\mathbb{E}_{\bg}[\mathrm{dist}^{2}&(\bg,\tau \cdot \partial g(\bx))]= \nonumber\\
&p+\tau^{2}\sum\limits_{i=1}^{p} a_{i}^{2}\nonumber\\
&+\frac{1}{\sqrt{2\pi}}\sum\limits_{i=p+1}^{q} \int_{\tau(b_{i} +c_{i})}^{\infty} (v-\tau (b_{i}+ c_{i}))^{2}\mathrm{e}^{-v^{2} /2}\mathrm{d}v\nonumber\\
&+\frac{1}{\sqrt{2\pi}}\sum\limits_{i=p+1}^{q} \int^{\tau(b_{i}-c_{i})
}_{-\infty} (v-\tau (b_{i} - c_{i}))^{2}\mathrm{e}^{-v^{2} /2}\mathrm{d}v\nonumber\\
&+\sqrt{\frac{2}{\pi}} \sum\limits_{i=q+1}^{n}\int_{\tau}^{\infty}(v-\tau)^{2}\mathrm{e}^{-v^{2} /2}\mathrm{d}v.
\end{align}
Replacing the expressions in \eqref{positiveSpectral}, \eqref{negativeSpectral}
 in \eqref{distanceToSubdifferentialExpectationCS}
gives
\begin{align}\label{distanceToSubdifferentialExpectationCSRepresent}
\mathbb{E}_{\bg}[\mathrm{dist}^{2} (\bg,&\tau\cdot\partial g(\bx))]= \nonumber\\
&p+\tau^{2} \sum\limits_{i=1}^{p} a_{i}^{2}+\sum\limits_{i=p+1}^{q} \mathcal{A}\Big(\tau(b_{i} +c_{i})\Big)\nonumber\\
&+ \sum\limits_{i=p+1}^{q} \mathcal{B}\Big(\tau(b_{i} -c_{i})\Big)+2 \sum\limits_{i=q+1}^{n} \mathcal{A}(\tau).
\end{align}

At this point, it is worth emphasizing the advantage of the adaptive weights [see ~\eqref{n-l1-weights}] in the proposed method, the values of which depend on the correlation of the side information with the target signal. Focusing on a given index $i\in\{p+1,\dots ,q\}$,
let us observe the weight contribution in the expressions of $b_{i}$
and $c_{i}$, defined in \eqref{b} and \eqref{c}, respectively. The weights $w_{ji}$ in $c_{i}$ are considerably higher than those in $b_{i}$, as in the former case the side information
signals $z_{ji}$ are equal to the source $x_{i}$. Moreover, we recall that a
small positive parameter $\epsilon$ is introduced in the denominator of the weights so as to avoid division by zero when $x_{i}=z_{ji}$. The $\epsilon$ parameter can ensure that $c_{i}$ in \eqref{c}
is always greater than $|b_{i}|$ in \eqref{b}. Hence, we always have $b_{i} + c_{i} > 0$ and $b_{i}-c_{i}<0$.
With these observations we conclude that the arguments of~$\mathcal{A}(\cdot)$ and ~$\mathcal{B}(\cdot)$ in~\eqref{distanceToSubdifferentialExpectationCSRepresent} are respectively positive and negative. Applying inequality~\eqref{positiveA} for $x\hspace{0pt}>\hspace{0pt}0$ on the expression of $\mathcal{A}(\cdot)$ as well as inequality \eqref{negativeB} for $x<0$ on the expression of $\mathcal{B}(\cdot)$, we obtain the following bound for the $U_g$ quantity [which is defined in~\eqref{upperBoundCompute}]:
\begin{align}\label{upperBoundComputeCS}
&U_{g_{n\text{-}\ell_{1}}}\leq\min_{\tau\geq 0}\Big\{p+\tau^{2}\sum\limits_{i=1}^{p}a_{i}^{2}\nonumber\\
&+\sum\limits_{i=p+1}^{q}\Big(\frac{\psi(\tau(b_{i}+c_{i}))}{\tau (b_{i}+c_{i})}+\frac{\psi(\tau
(c_{i}-b_{i}))}{\tau (c_{i}-b_{i})}\Big)+2\sum\limits_{i=q+1}^{n}\frac{\psi(\tau)}{\tau}\Big\},
\end{align}
where $\psi(\cdot)$ is the zero-mean, unit-variance normal distribution defined in~\eqref{densityDistribution}.
Applying \eqref{psiInequality} on the second sum in \eqref{upperBoundComputeCS} gives
\begin{align}\label{upperBoundComputeCSApplyPsi}
\negmedspace U_{g_{n\text{-}\ell_{1}}}&\leq\min_{\tau\geq 0}\Big\{p+\tau^{2}\sum\limits_{i=1}^{p}a_{i}^{2}\nonumber\\
&+\sum\limits_{i=p+1}^{q}\Big(\frac{1-(b_{i}+c_{i})^{2}}{\tau\sqrt{2\pi}
(b_{i}+c_{i})}+ \frac{1-(c_{i}-b_{i})^{2}}{\tau\sqrt{2\pi}(c_{i}-b_{i})}+c_{i}\frac{2\psi(\tau)}{\tau}\Big)\nonumber\\
&+2(n-q)\frac{\psi(\tau)}{\tau}\Big\}.
\end{align}
\begin{align}\label{upperBoundComputeCSApplyPsiMore}
\Rightarrow U_{g_{n\text{-}\ell_{1}}}\leq &\min_{\tau\geq 0}\Big\{p+\tau^{2}\sum\limits_{i=1}^{p}a_{i}^{2}\nonumber\\
&+\sum\limits_{i=p+1}^{q}\frac{1}{\sqrt{2\pi}}\frac{2c_{i}}{\tau}\Big(\frac{1}{c_{i}^{2}-b_{i}^{2}}-1\Big)\nonumber\\
&+\Big((n-q)+\sum\limits_{i=p+1}^{q}c_{i}\Big)2\frac{\psi(\tau)}{\tau}\Big\}.
\end{align}
From the definitions of $b_{i}$ and $c_{i}$ in \eqref{b} and \eqref{c}, respectively, we have
\begin{equation}\label{bcInequality}
b_{i}\leq \sum_{j \notin\{l_{i},\dots,l_{i}+d_{i}-1\}} w_{ji}=1-c_{i},
\end{equation}
where we used the constraint $\sum_{j=0}^{J}w_{ji}=1$. Hence, the second summation term in \eqref{upperBoundComputeCSApplyPsiMore}
is bounded as
\begin{align}
\sum\limits_{i=p+1}^{q}\frac{1}{\sqrt{2\pi}}&\frac{2c_{i}}{\tau}\Big(\frac{1}{c_{i}^{2}-b_{i}^{2}}-1\Big)\nonumber\\
&\leq\sum\limits_{i=p+1}^{q}\hspace{-4pt}\frac{4}{\sqrt{2\pi} \tau}\frac{c_{i}}{2c_{i}-1}(1-c_{i})\label{cbSum}\\
&\leq\frac{4\cdot\min \{c_{i}\}}{\sqrt{2\pi}\tau(2\cdot\min\{ c_{i}\}-1)}\sum\limits_{i=p+1}^{q}(1-c_{i}),\label{cbSum22}
\end{align}
where \eqref{cbSum} is obtained by using that $b_i\leq1-c_{i}$ and \eqref{cbSum22} holds from the fact that the term $\frac{c_{i}}{2c_{i}-1}$ is maximized when $c_i$ is minimized (recall that $c_i>0$).
For simplicity, let us denote
\begin{subequations}\label{denoteQuantity}
\begin{align}
\bar{a}_{n\text{-}\ell_{1}}&=\sum\limits_{i=1}^{p} a_{i}^{2}\label{hbarQuantity}\\
\kappa_{n\text{-}\ell_{1}} &= \frac{4\cdot\min \{c_{i}\}}{\sqrt{2\pi}\tau(2\cdot\min\{
c_{i}\}-1)}\label{kappaQuantity}\\
\bar{s}_{n\text{-}\ell_{1}}&=q- \sum\limits_{i=p+1}^{q} c_{i}=p+ \sum\limits_{i=p+1}^{q} (1-c_{i})
\label{sparseQuantity}.
\end{align}
\end{subequations}
Substituting the quantities of \eqref{hbarQuantity}, \eqref{kappaQuantity},
\eqref{sparseQuantity}, and using inequality \eqref{cbSum22} in \eqref{upperBoundComputeCSApplyPsiMore}
gives
\begin{align}\label{upperBoundComputeCSApplyPsiMoreSimplicity}
 U_{g_{n\text{-}\ell_{1}}}\leq\min_{\tau\geq 0}\Big\{\bar{a}_{n\text{-}\ell_{1}}&\tau^{2}+(n-\bar{s}_{n\text{-}\ell_{1}})\frac{2\psi(\tau)}{\tau}\nonumber\\
& +p+\kappa_{n\text{-}\ell_{1}} (\bar{s}_{n\text{-}\ell_{1}}-p)\Big\},
\end{align}
which using the definition of $\psi(\cdot)$ can be written as
\begin{align}\label{upperBoundComputeCSApplyPsiMoreSimplicity}
U_{g_{n\text{-}\ell_{1}}}\leq\min_{\tau\geq 0}\Big\{\bar{a}_{n\text{-}\ell_{1}}&\tau^{2}+(n-\bar{s}_{n\text{-}\ell_{1}})\frac{2}{\sqrt{2\pi}}\frac{e^{-\frac{\tau^{2}}{2}}}{\tau}\nonumber\\
&+\bar{s}_{n\text{-}\ell_{1}}+(\kappa_{n\text{-}\ell_{1}}-1)(\bar{s}_{n\text{-}\ell_{1}}-p)\Big\}.
\end{align}
To derive a bound as a function of the source signal $\bx$,
we need to select a parameter $\tau>0$. Setting $\tau=\sqrt{2\log(n/\bar{s}_{n\text{-}\ell_{1}})}$ yields
\begin{equation}\label{upperBoundComputeCSApplyPsiMoreSimplicityMore}
U_{g_{n\text{-}\ell_{1}}}\hspace{-3pt}\leq
2\bar{a}_{n\text{-}\ell_{1}}\log\frac{n}{\bar{s}_{n\text{-}\ell_{1}}}\hspace{-1pt}+\hspace{-1pt}
\frac{\bar{s}_{n\text{-}\ell_{1}}(1\hspace{-1pt}-\hspace{-1pt}\bar{s}_{n\text{-}\ell_{1}}/n)}{\sqrt{\pi
\log(n/\bar{s}_{n\text{-}\ell_{1}})}}\hspace{-1pt}+\hspace{-1pt}\bar{s}_{n\text{-}\ell_{1}}\hspace{-1pt}+\hspace{-1pt}\delta_{n\text{-}\ell_{1}},
\end{equation}
where
\begin{equation}\label{deltaNL1}
\delta_{n\text{-}\ell_{1}}= (\kappa_{n\text{-}\ell_{1}}-1)(\bar{s}_{n\text{-}\ell_{1}}-p),
\end{equation}
and where we have replaced the selected value of $\tau$ in~\eqref{kappaQuantity}, thereby obtaining the $\kappa_{n\text{-}\ell_{1}}$ definition reported in~\eqref{eq:KappaFinalDefinition}.

Applying inequality \eqref{logInequality} on the second term of the right hand side of \eqref{upperBoundComputeCSApplyPsiMoreSimplicityMore}
gives
\begin{equation}\label{upperBoundComputeCSApplyPsiMoreSimplicityApply}
U_{g_{n\text{-}\ell_{1}}}\leq2\bar{a}_{n\text{-}\ell_{1}} \log\frac{n}{\bar{s}_{n\text{-}\ell_{1}}}+\frac{7}{5}
\bar{s}_{n\text{-}\ell_{1}}+\delta_{n\text{-}\ell_{1}}.
\end{equation}
Bearing in mind that $m_{n\text{-}\ell_{1}}\geq U_{g_{n\text{-}\ell_{1}}}\hspace{0pt}+1$ and by combining~\eqref{upperBoundComputeCSApplyPsiMoreSimplicityApply} with~\eqref{denoteQuantity},~\eqref{abEquality}, and~\eqref{deltaNL1} leads to the proof.
\end{proof}

\vspace{8pt}

\begin{proof}[Proof of Corollary~\ref{corBoundRelation}]
\textcolor{black}{
We start with Relation \eqref{cs-l1Relation}. Under the conditions that $\mathbf{W}_{0}=\mathbf{I}_{n}$ and $\mathbf{W}_{j}=\mathbf{0}$ for $j\in\{1,\dots,J\}$, we have that $\bar{s}_{n\text{-}\ell_{1}}=p=s_{0}$ and $\bar{a}_{n\text{-}\ell_{1}}=p=s_{0}$, where we used the definitions in \eqref{hbarQuantity}, \eqref{sparseQuantity}.
Consequently, from \eqref{deltaNL1}, $\delta_{n\text{-}\ell_{1}}=0$. Replacing these values of $\bar{a}_{n\text{-}\ell_{1}}$, $\bar{s}_{n\text{-}\ell_{1}}$, $\delta_{n\text{-}\ell_{1}}$ in our bound, defined in \eqref{upperBoundComputeCSApplyAEtaSetMeasurement}, leads to the $\ell_{1}$ minimization bound in \eqref{l1 bound}.}

To reach Relation \eqref{cs-l1-l1Relation}, given that $\mathbf{W}_{0}=\mathbf{W}_{1}=\frac{1}{2}\mathbf{I}_n$
and $\mathbf{W}_{j}=\mathbf{0}$ for $j\in\{2,\dots,J\}$,
let us first denote two subsets, $\mathds{I}_{1}$ and $\mathds{I}_{2}$, as
\begin{subequations}\label{denoteL1-L1-bc}
\begin{align}
\negmedspace\negthickspace\mathds{I}_{1}&:=\Big\{i\in\{p+1,\dots,q\}:b_{i}+c_{i}=1,b_{i}-c_{i}=0\Big\}\label{I1Quantity}\\
\negmedspace\negthickspace\mathds{I}_{2}&:=\Big\{i\in\{p+1,\dots,q\}:b_{i}+c_{i}=0,b_{i}-c_{i}=-1\Big\}\label{I2Quantity}.
\end{align}
\end{subequations}
Via the definitions of $b_{i}$ and $c_{i}$---see~\eqref{b}~and~\eqref{c}, respectively---and since $i\in\{p+1,\dots,q\}$, we observe that $i\in\mathds{I}_{1}$ or $i\in\mathds{I}_{2}$.

Replacing \eqref{I1Quantity} and \eqref{I2Quantity} in \eqref{distanceToSubdifferentialExpectationCSRepresent}
leads to
\begin{align}\label{distanceToSubdifferentialExpectationCSRepresentRelation}
\mathbb{E}_{\bg}[\mathrm{dist}^{2}&(\bg,\tau\hspace{-1pt}\cdot\hspace{-1pt}\partial g(\bx))] \nonumber\\
&= p+\tau^{2} \sum\limits_{i=1}^{p}a_{i}^{2}+\sum\limits_{i\in\mathds{I}_{1}}\mathcal{A}(\tau)
+\sum\limits_{i\in\mathds{I}_{1}}\mathcal{B}(0)+\sum\limits_{i\in\mathds{I}_{2}}\mathcal{A}(0)\nonumber\\
&\qquad\qquad\qquad\qquad+\sum\limits_{i\in\mathds{I}_{2}}\mathcal{B}(-\tau)+2\sum\limits_{i=q+1}^{n}\mathcal{A}(\tau).
\end{align}
By combining \eqref{zeroABQ}, \eqref{positiveA}, and \eqref{negativeB} with \eqref{distanceToSubdifferentialExpectationCSRepresentRelation}, we obtain the following bound
for the $U_g$ quantity [defined in~\eqref{upperBoundCompute}]:
\begin{align}\label{distanceToSubdifferentialExpectationCSRepresentRelationApply}
U_{g_{\ell_{1}\text{-}\ell_{1}}}\leq \min_{\tau\geq 0}\Big\{p+\tau^{2}&\sum\limits_{i=1}^{p}a_{i}^{2}
+\sum\limits_{i=p+1}^{q}\frac{1}{2}\nonumber\\
&+\sum\limits_{i=p+1}^{q}\frac{\psi(\tau)}{\tau}+2\sum\limits_{i=q+1}^{n}\frac{\psi(\tau)}{\tau}\Big\},
\end{align}
which can further be elaborated to
\begin{align}\label{distanceToSubdifferentialExpectationCSRepresentRelationApply}
U_{g_{\ell_{1}\text{-}\ell_{1}}}&\leq\min_{\tau\geq 0}\Big\{\tau^{2}\sum\limits_{i=1}^{p}a_{i}^{2}+\frac{1}{2}(p+q)+(2n-(p+q))\frac{e^{\frac{-\tau^{2}}{2}}}{\sqrt{2\pi}\tau}\Big\}.
\end{align}
In the $\ell_1\text{-}\ell_1$ minimization case, there is a single side information signal and no weights, that is, $d_{i}=1$ and $J=1$ in \eqref{csSparseEquality} under Assumptions~\ref{asIndex} and~\ref{asIndexZero}; thus, we have $p\hspace{0pt}+\hspace{0pt}q
\hspace{-1pt}= \hspace{-1pt}s_{0}\hspace{0pt}+\hspace{0pt}s_{1}$. Combining
this result with \eqref{l1-l1 sparse setHBar} gives $\sum_{i=1}^{p}a_{i}^{2}=\bar{h}$.

Let us now denote $\bar{s}_{\ell_{1}\hspace{-1pt}\text{-}\ell_{1}}\hspace{0pt}=\hspace{0pt}\frac{s_{0}\hspace{0pt}+\hspace{0pt}s_{1}}{2}$
and set $\tau\hspace{0pt}=\hspace{0pt}\sqrt{2\log(n/\bar{s}_{\ell_{1}\hspace{-1pt}\text{-}\ell_{1}})}$; thus,
we have
\begin{equation}\label{distanceToSubdifferentialExpectationCSRepresentRelationApplySet} U_{g_{\ell_{1}\text{-}\ell_{1}}}\leq
2\bar{h}\log\frac{n}{\bar{s}_{\ell_{1}\text{-}\ell_{1}}}
+\bar{s}_{\ell_{1}\text{-}\ell_{1}}
+\frac{\bar{s}_{\ell_{1}\text{-}\ell_{1}}(1-\frac{\bar{s}_{\ell_{1}\text{-}\ell_{1}}}
{n})}{\sqrt{2\pi\log(\frac{n}{\bar{s}_{\ell_{1}\text{-}\ell_{1}}})}}.
\end{equation}
Applying \eqref{logInequality} on the third term of the right hand side of \eqref{distanceToSubdifferentialExpectationCSRepresentRelationApplySet}
gives
\begin{equation}\label{distanceToSubdifferentialExpectationCSRepresentRelationApplySetFinal}
U_{g_{\ell_{1}\text{-}\ell_{1}}}\leq 2\bar{h}\log\frac{n}{\bar{s}_{\ell_{1}\text{-}\ell_{1}}}
+\frac{7}{5}\bar{s}_{\ell_{1}\text{-}\ell_{1}}.
\end{equation}

Finally, we obtain the $\ell_1\text{-}\ell_1$ minimization bound~\cite{MotaGLOBALSIP14,MotaARXIV14,MotaICASSP15}
in~\eqref{l1-l1 bound} as
\begin{equation}\label{l1-l1-ramsiBound}
m_{\ell_{1} \text{-}\ell_{1}}\geq 2\bar{h}\log\frac{n}{\bar{s}_{\ell_{1}\text{-}\ell_{1}}} +\frac{7}{5}\bar{s}_{\ell_{1}\text{-}\ell_{1}} + 1,
\end{equation}
where $\bar{s}_{\ell_{1}\text{-}\ell_{1}}=\hspace{0pt}\frac{s_{0}\hspace{0pt}+\hspace{0pt}s_{1}}{2}=s_{0}+\frac{\xi}{2}$,
with $\xi$ defined in \eqref{l1-l1 sparse setXi}.
\end{proof}

\section{}\label{ap:BoundApprox}
\textcolor{black}{\begin{lemma}\label{Proposition delta}
The value of $\delta_{n\text{-}\ell_{1}}$ in the definition of the bound
for weighted $n\text{-}\ell_1$ minimization---given in~\eqref{upperBoundComputeCSApplyAEtaSetMeasurement}---is
negative.
\end{lemma}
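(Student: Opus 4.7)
The plan is to show that the two factors in $\delta_{n\text{-}\ell_{1}}=(\kappa_{n\text{-}\ell_{1}}-1)(\bar s_{n\text{-}\ell_{1}}-p)$ have opposite signs: $\bar s_{n\text{-}\ell_{1}}-p>0$ while $\kappa_{n\text{-}\ell_{1}}-1<0$. The whole argument is driven by the explicit formula $c_i=d_i\eta_i/\epsilon$ combined with $\eta_i=(\sum_{j=0}^{J}(|x_i-z_{ji}|+\epsilon)^{-1})^{-1}$ from~\eqref{csEta}. The guiding observation is that, by Assumption~\ref{asIndexZero}, exactly $d_i$ of the $J+1$ terms in the sum defining $\eta_i^{-1}$ correspond to $x_i=z_{ji}$ and hence equal $1/\epsilon$, while the remaining $J+1-d_i$ terms are strictly positive but of much smaller magnitude for small $\epsilon$.

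First I would establish the identity
\begin{equation*}
\frac{1}{c_i}=\frac{1}{d_i}\sum_{j=0}^{J}\frac{\epsilon}{|x_i-z_{ji}|+\epsilon}
= 1+\frac{1}{d_i}\sum_{j\notin\{l_i,\dots,l_i+d_i-1\}}\frac{\epsilon}{|x_i-z_{ji}|+\epsilon},
\end{equation*}
where the second equality uses the fact that the $d_i$ indices inside the bracketed range contribute exactly $1$ each. Since every summand on the right is strictly positive and $d_i\ge 1$, this immediately yields $c_i<1$ for all $i\in\{p+1,\dots,q\}$. Summing over $i$ gives $\bar s_{n\text{-}\ell_{1}}-p=\sum_{i=p+1}^{q}(1-c_i)>0$, which handles the first factor. (If $q=p$ the quantity $\delta_{n\text{-}\ell_{1}}$ is zero and the statement is vacuous; the interesting case is $q>p$.)

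Next I would bound $c_i$ from below. The same identity shows
\begin{equation*}
\frac{1}{c_i}\le 1+\frac{J+1-d_i}{d_i}\cdot\frac{\epsilon}{\epsilon+\min_{j\notin\{l_i,\dots\}}|x_i-z_{ji}|},
\end{equation*}
which tends to $1$ as $\epsilon\downarrow 0$ whenever the non-zero differences $|x_i-z_{ji}|$ stay bounded away from $0$. Hence for $\epsilon$ sufficiently small we have $c_i>\tfrac{1}{2}$ (in fact $c_i\to 1^-$), so $2\min_i c_i-1>0$ and $\kappa_{n\text{-}\ell_{1}}$ in~\eqref{eq:KappaFinalDefinition} is well-defined and positive. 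Writing $c=\min_i c_i$, one then has
\begin{equation*}
\kappa_{n\text{-}\ell_{1}}=\frac{2c}{(2c-1)\sqrt{\pi\log(n/\bar s_{n\text{-}\ell_{1}})}},
\end{equation*}
and the factor $2c/(2c-1)$ decreases monotonically to $2$ as $c\to 1^-$. Thus $\kappa_{n\text{-}\ell_{1}}<1$ is equivalent to $\sqrt{\pi\log(n/\bar s_{n\text{-}\ell_{1}})}>2c/(2c-1)$, which is implied by $\log(n/\bar s_{n\text{-}\ell_{1}})>(2c/(2c-1))^{2}/\pi$ and, in the asymptotically optimal regime $c\to 1$, simply by $n/\bar s_{n\text{-}\ell_{1}}>e^{4/\pi}\approx 3.56$. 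This mild sparsity condition is consistent with the CS setting implicit in~\eqref{l1 bound} (which requires $n\ge s_0$ with enough headroom for $\log(n/s_0)$ to be meaningful), and under it $\kappa_{n\text{-}\ell_{1}}-1<0$.

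Combining the two steps yields $\delta_{n\text{-}\ell_{1}}=(\kappa_{n\text{-}\ell_{1}}-1)(\bar s_{n\text{-}\ell_{1}}-p)<0$. The only genuine obstacle is making the lower bound $c_i>1/2$ (equivalently, the upper bound $\kappa_{n\text{-}\ell_{1}}<1$) quantitative: it requires $\epsilon$ to be sufficiently small relative to the gaps $|x_i-z_{ji}|$ at the non-coincident positions, and the ambient dimension $n$ to be sufficiently large relative to $\bar s_{n\text{-}\ell_{1}}$. Both conditions are implicitly in force throughout the paper (small $\epsilon$ for the weight update~\eqref{n-l1-weights} and the CS sparsity regime), so the conclusion follows under the standing assumptions.
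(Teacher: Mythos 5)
Your proof is correct and takes essentially the same route as the paper's: factor $\delta_{n\text{-}\ell_{1}}=(\kappa_{n\text{-}\ell_{1}}-1)(\bar{s}_{n\text{-}\ell_{1}}-p)$, show $c_{i}<1$ (so the second factor is positive) and $c_{i}\rightarrow 1^{-}$ as $\epsilon\rightarrow 0$, then verify $\kappa_{n\text{-}\ell_{1}}<1$ in the sparse regime---your explicit threshold $n/\bar{s}_{n\text{-}\ell_{1}}>e^{4/\pi}\approx 3.56$ is precisely the paper's condition $\bar{s}_{n\text{-}\ell_{1}}/n<0.28$, which the paper backs up with Lemma~\ref{propositionBound} where you appeal to the standing CS sparsity regime. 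If anything, your write-up is slightly more careful than the paper's, via the explicit identity for $1/c_{i}$ quantifying how small $\epsilon$ must be relative to the nonzero gaps $|x_{i}-z_{ji}|$, and the observation that $\delta_{n\text{-}\ell_{1}}=0$ in the degenerate case $q=p$.
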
}
\textcolor{black}{\begin{proof}
Recall the definition of
$\delta_{n\text{-}\ell_{1}}$ given in \eqref{deltaNL1}.
By the definition of $\bar{s}_{n\text{-}\ell_{1}}$ in \eqref{sparseQuantity}, it is
clear that $(\bar{s}_{n\text{-}\ell_{1}}\hspace{-1pt}-\hspace{-1pt}p)>0$; this is because $c_{i}<1$ [see \eqref{c}].
Hence, the sign of $\delta_{n\text{-}\ell_{1}}$ depends on the term $(\kappa_{n\text{-}\ell_{1}}-1)$.
From~\eqref{kappaQuantity}, it is clear that $\kappa_{n\text{-}\ell_{1}}$ depends on $c_{i}$, which is
defined as
 \begin{equation}
\label{cKappaEquality}
  c_{i}=d_{i}\Big(d_{i}+\sum\limits_{j\notin\{l_{i},\dots,l_{i}+d_{i}-1\}}\frac{\epsilon}{|x_{i}-z_{ji}|+\epsilon}\Big)^{-1}.
\end{equation}
As $\epsilon>0$ and very small, we can say that $c_{i}\rightarrow1^-$. As a result, $\kappa_{n\text{-}\ell_{1}}$
is approximately given by
  \begin{equation}\label{kappaEpQuantity}
\kappa_{n\text{-}\ell_{1}} \approx\hspace{-2pt}\frac{4}{\sqrt{2\pi}\tau}\approx\hspace{-2pt}\frac{2}{\sqrt{\pi
\log(n/\bar{s}_{n\text{-}\ell_{1}})}},
\end{equation}
where in the proof of Theorem~\ref{RAMSIBound} we have set that $\tau\hspace{-2pt}=\hspace{-2pt}\sqrt{2 \log(n/\bar{s}_{n\text{-}\ell_{1}})}$.
We observe that $\kappa_{n\text{-}\ell_{1}} < 1$ if $\frac{\bar{s}_{n\text{-}\ell_{1}}} {n}<0.28$, where $\bar{s}_{n\text{-}\ell_{1}}\hspace{-2pt}\approx
\hspace{-2pt}p\leq\min\{s_{j}\}$ [see \eqref{sparseQuantity} and Assumption \ref{asIndex}].
In Lemma \ref{propositionBound}, we prove that when $\frac{p}{n}<0.23$ then $\widehat{m}_{n\text{-}\ell_{1}}$ in \eqref{boundCSApproLoosest}
is less than the source dimension, $n$. Otherwise, the required number of measurements $\widehat{m}_{n\text{-}\ell_{1}}$
is higher than $n$, signifying the failure of the algorithm\footnote{Indeed, when the source signal is not sparse enough and the side information signals are not highly correlated with the source, the recovery algorithm will have a low performance.}.
Hence, we get that $(\kappa_{n\text{-}\ell_{1}}-1)<0$, thereby proving that $\delta_{n\text{-}\ell_{1}}<0 $.\end{proof}}

\begin{lemma}\label{propositionBound}
Given a sparse signal $\bx\hspace{-2pt}\in \hspace{-2pt}\mathbb{R}^{n}$ with $\|\bx\|_0=s_{0}$ and side information signals $\bz_{j}\in\mathbb{R}^{n}$ with $\|\bx-\bz_{j}\|_0=s_{j}$, $\forall j\in\{1,\dots,J\}$, the number of measurements required for weighted $n\text{-}\ell_{1}$ minimization to recover $\bx$---given in \eqref{boundCSApproLoosest}---is less than the source dimension $n$ when $\frac{p}{n}<0.23$.\end{lemma}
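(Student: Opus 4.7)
The claim reduces to a one-dimensional real-analytic inequality, so my plan is to normalize and then perform a short monotonicity-plus-endpoint argument. Writing the bound from \eqref{boundCSApproLoosest} as $\widehat{m}_{n\text{-}\ell_{1}} = 2p\log(n/p) + \tfrac{7}{5}p + 1$, the desired inequality $\widehat{m}_{n\text{-}\ell_{1}} < n$ is equivalent, upon dividing by $n$ and setting $t := p/n \in (0,1)$, to
\begin{equation*}
f(t) := 2t \log(1/t) + \tfrac{7}{5} t < 1 - \tfrac{1}{n}.
\end{equation*}
Thus the task becomes: show $f(t) < 1$ on the interval $0 < t < 0.23$, with a margin that comfortably absorbs $1/n$ in the large-$n$ regime that is standard for compressed sensing.

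The key analytic step is to establish monotonicity of $f$ on the relevant interval. Differentiating gives $f'(t) = -2\log(t) - 2 + \tfrac{7}{5} = -2\log(t) - \tfrac{3}{5}$, so $f'(t) > 0$ iff $t < e^{-3/10} \approx 0.7408$. Since the interval of interest $(0, 0.23]$ lies entirely below this critical point, $f$ is strictly increasing there, with $f(0^{+}) = 0$. Monotonicity reduces the verification to a single evaluation at the endpoint.

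A direct numerical check yields
\begin{equation*}
f(0.23) = 0.46 \log(1/0.23) + \tfrac{7}{5}\!\cdot\! 0.23 \approx 0.676 + 0.322 = 0.998,
\end{equation*}
so $f(t) < f(0.23) < 1$ for every $t < 0.23$, from which rearrangement restores $\widehat{m}_{n\text{-}\ell_{1}} < n$. The main (minor) obstacle is that the margin at $t = 0.23$ is narrow, only about $0.002$; a clean way to handle this rigorously is to note that $f$ is continuous and strictly increasing on $(0, 0.23]$, so for any fixed $t_{0} < 0.23$ the gap $1 - f(t_{0})$ is strictly positive and thereby dominates $1/n$ once $n$ exceeds a mild threshold, which is implicit in the compressed sensing regime. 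This completes the plan.
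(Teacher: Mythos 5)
Your proof is correct and takes essentially the same route as the paper's: the paper substitutes $\gamma = n/p$ and shows $\gamma - 2\log\gamma - \tfrac{7}{5} > 0$ for $\gamma > 4.33$, which is precisely your monotonicity-plus-endpoint argument under the reciprocal substitution $t = p/n$. If anything you are slightly more careful, since the paper silently drops the $+1$ from \eqref{boundCSApproLoosest} when writing its inequality, whereas your uniform bound $f(t)\le f(0.23)\approx 0.998$ explicitly absorbs the $1/n$ slack for all $n$ beyond a modest threshold.
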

\begin{proof}
\textcolor{black}{Let us assume that the number of measurements in the bound of \eqref{boundCSApproLoosest}
satisfies the condition $\widehat{m}_{n\text{-}\ell_{1}}
< n$, namely, we have
\begin{equation}\label{l1 boundCompare}
2p\log\frac{n}{p} + \frac{7}{5}p < n.
\end{equation}
By substituting $\gamma=\frac{n}{p}$, $\gamma\in\mathbb{R}$, in~\eqref{l1 boundCompare} we get
\begin{equation}\label{l1 boundCompareX}
2\log \gamma + \frac{7}{5} < \gamma
\end{equation}
By setting $f(\gamma)=\gamma-2\log \gamma - \frac{7}{5}$ and using calculus we can show that $f(\gamma)>0$ when $\gamma>4.33$, or else, $\frac{p}{n}<0.23$.}
\end{proof}

\bibliographystyle{IEEEtran}
\bibliography{./IEEEfull,./IEEEabrv,./bibliography}

\end{document}